\newtheorem{Def}{Definition}[section]
\renewcommand\footnotetextcopyrightpermission[1]{} 
  \providecommand\BibTeX{{%
    \normalfont B\kern-0.5em{\scshape i\kern-0.25em b}\kern-0.8em\TeX}}}
\begin{document}

\title{Spatio-Temporal Hierarchical Adaptive Dispatching for Ridesharing Systems}

\author{Chang Liu{$^\dagger$}, Jiahui Sun{$^\dagger$}, Haiming Jin{$^{\dagger}$}, Meng Ai{$^\S$}, Qun Li{$^\S$}, Cheng Zhang{$^\S$}, Kehua Sheng{$^\S$}, Guobin Wu{$^\S$}, Xiaohu Qie{$^\S$}, Xinbing Wang{$^\dagger$}}
\affiliation{$^\dagger$Shanghai Jiao Tong University, Shanghai, China \quad $^\S$Didi Chuxing, Beijing, China}

\renewcommand{\shortauthors}{Chang Liu, et al.}

\begin{abstract}
  Nowadays, ridesharing has become one of the most popular services offered by online ride-hailing platforms (e.g., Uber and Didi Chuxing). Existing ridesharing platforms adopt the strategy that dispatches orders over the entire city at a uniform time interval. However, the uneven spatio-temporal order distributions in real-world ridesharing systems indicate that such an approach is suboptimal in practice. Thus, in this paper, we exploit \textit{adaptive dispatching intervals} to boost the platform's profit under a guarantee of the maximum passenger waiting time. Specifically, we propose a hierarchical approach, which generates clusters of geographical areas suitable to share the same dispatching intervals, and then makes online decisions of selecting the appropriate time instances for order dispatch within each spatial cluster. Technically, we prove the impossibility of designing constant-competitive-ratio algorithms for the online adaptive interval problem, and propose online algorithms under partial or even zero future order knowledge that significantly improve the platform's profit over existing approaches. We conduct extensive experiments with a large-scale ridesharing order dataset, which contains all of the over 3.5 million ridesharing orders in Beijing, China, received by Didi Chuxing from October 1st to October 31st, 2018. The experimental results demonstrate that our proposed algorithms outperform existing approaches.
\end{abstract}

\maketitle
\section{Introduction}
Nowadays, the \textit{ridesharing} service offered by mobility-on-demand companies (e.g., Uber\footnote{https://www.uber.com/} and Didi Chuxing\footnote{https://www.didiglobal.com/}), which enables one vehicle to serve multiple orders simultaneously, has become an appealing alternative for traveling and commuting. On one hand, by sharing a ride with others, a passenger's mobility demand could usually be satisfied at a low price. On the other hand, ideally, ridesharing could dramatically decrease the overall number of vehicles on the road, which consequently helps reduce air pollution, conserve non-renewable energy resources, and alleviate traffic congestion. 

In practice, a ridesharing system is managed by a cloud-based platform that dispatches orders to vehicles after every fixed time interval (e.g., 2s) over the entire city \cite{Zheng2018PriceAware,Alonso2017PANS,Xu2018KDD}. However, such approach is usually suboptimal for the platform's profit, because in practice ridesharing orders are distributed unevenly both spatially and temporally, which is illustrated by the following Figures \ref{fig:all_city} and \ref{fig:heat}. As shown by Figure \ref{fig:all_city}, the number of ridesharing orders varies greatly over time. In the morning and evening peaks of weekdays during which most people commute, the number of orders is much more than that in the noon off-peak. On weekends, the peaks are much lower than those during weekdays, and there is even no obvious morning peak on Sunday. Furthermore, Figure \ref{fig:heat} shows that, even during the same periods, the number of ridesharing orders differs significantly across different geographical areas. 
\begin{figure}[h]
  \centering
  \includegraphics[width=\linewidth]{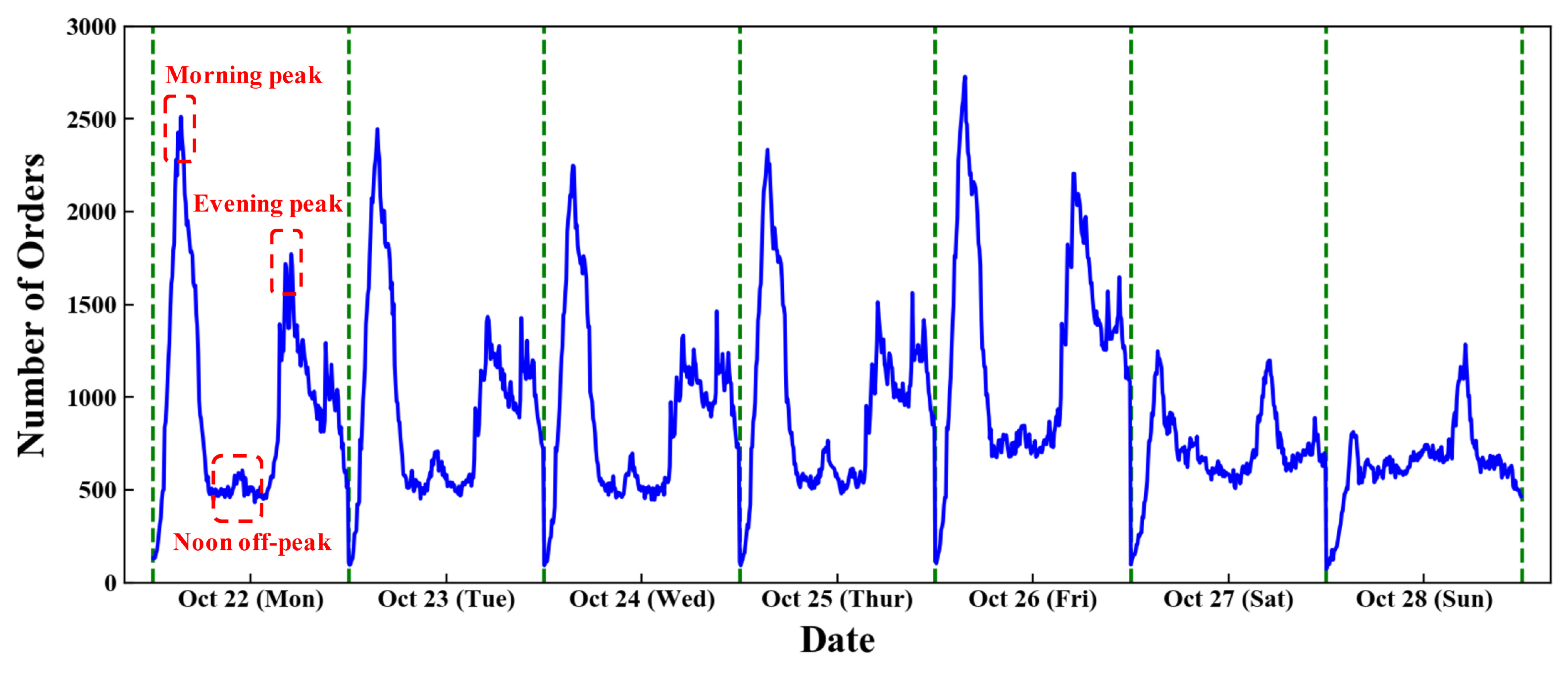}
  \vspace{-0.8cm}
  \caption{The number of ridesharing orders in Beijing received by Didi Chuxing in every 5 minutes from October 22nd to 28th, 2018.}
  \label{fig:all_city}
\end{figure}

\begin{figure}[h]
  \centering
  \subfigure[Morning peak]{
    \begin{minipage}[t]{0.3\linewidth}
      \centering
      \includegraphics[width=1.05\linewidth]{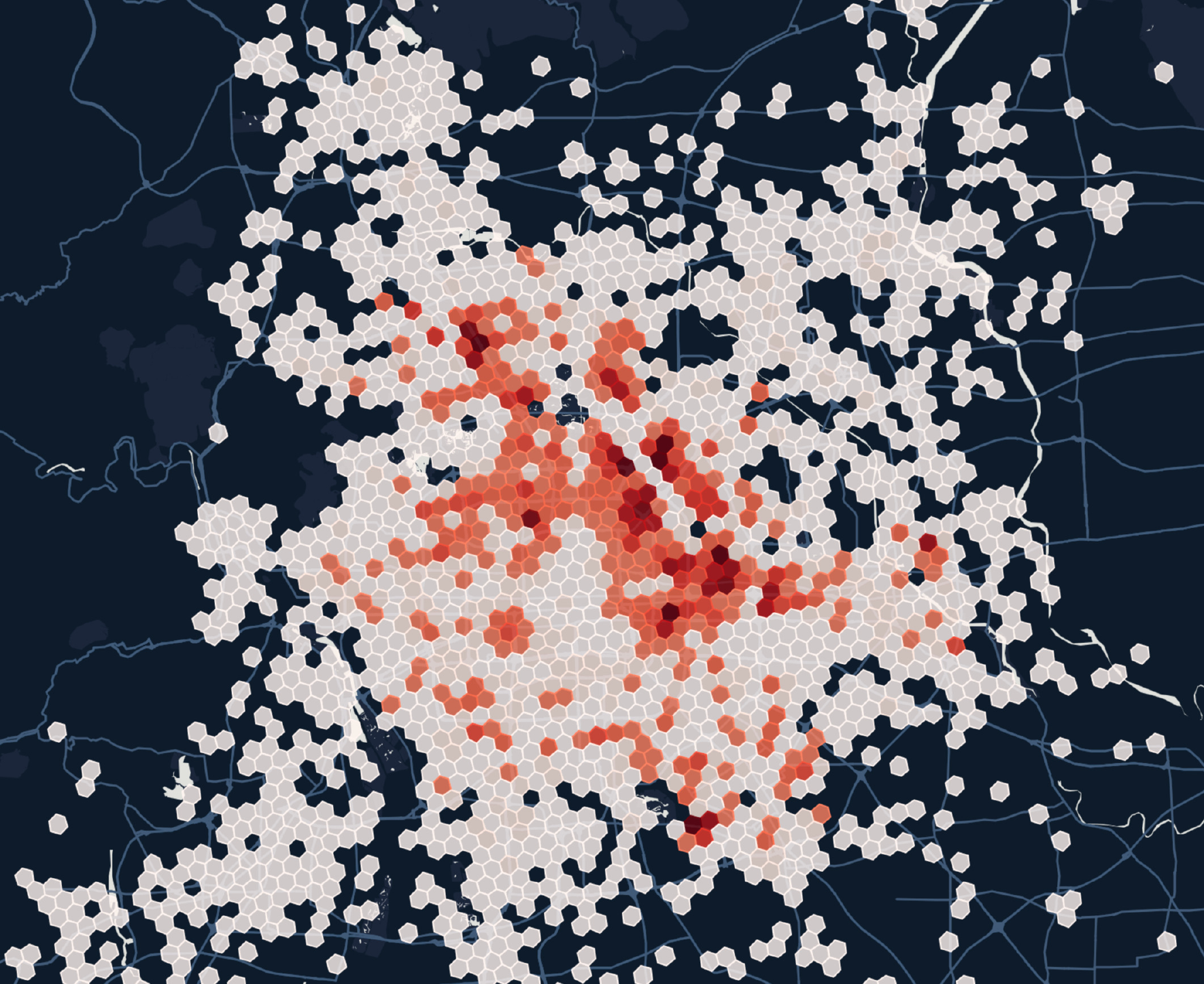}\vspace{-0.2cm}
      \label{fig:morn}
    \end{minipage}
  }
  \subfigure[Noon off-peak]{
    \begin{minipage}[t]{0.3\linewidth}
      \centering
      \includegraphics[width=1.05\linewidth]{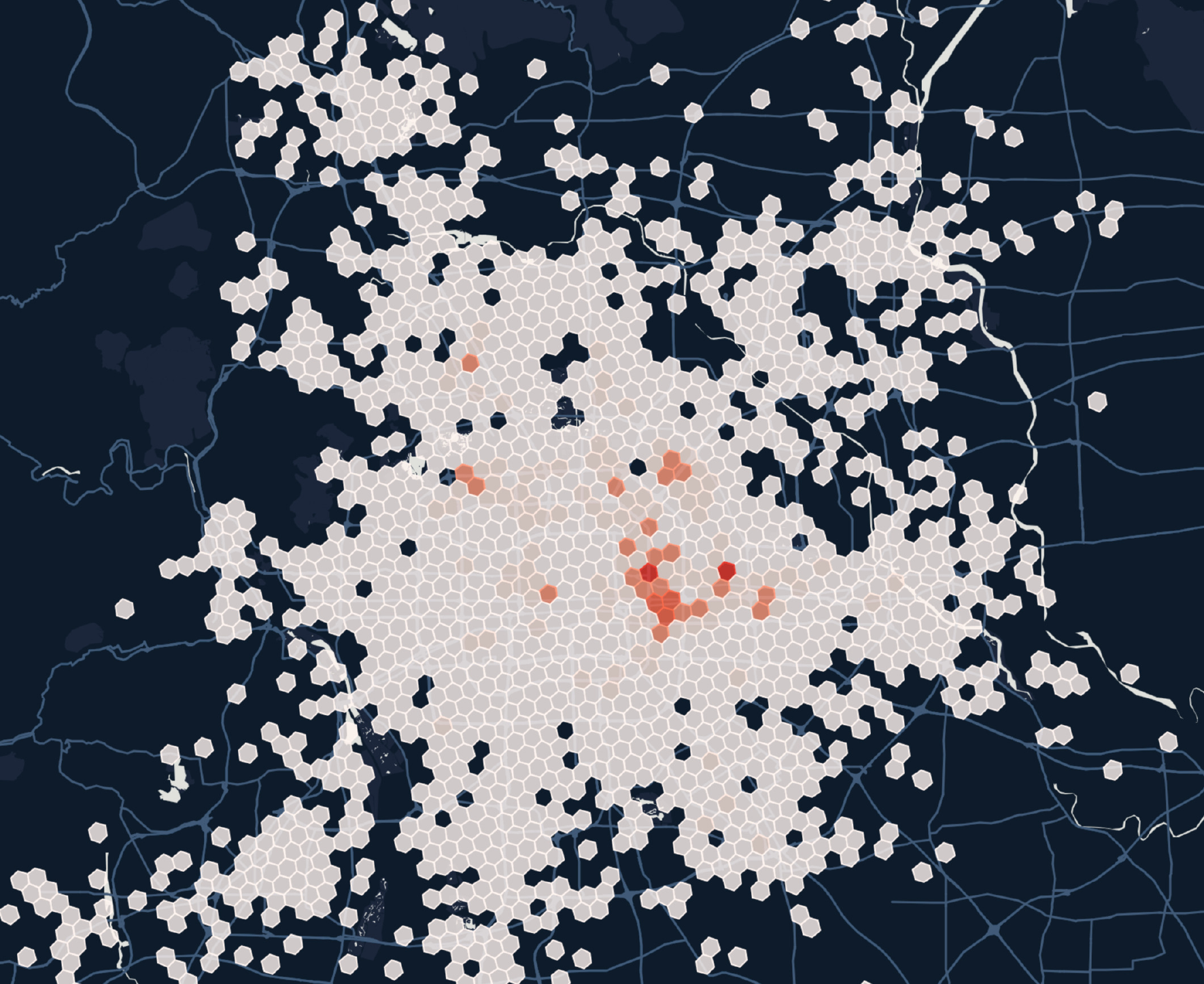}\vspace{-0.2cm}
      \label{fig:noon}
    \end{minipage}
  }
  \subfigure[Evening peak]{
    \begin{minipage}[t]{0.3\linewidth}
      \centering
      \includegraphics[width=1.05\linewidth]{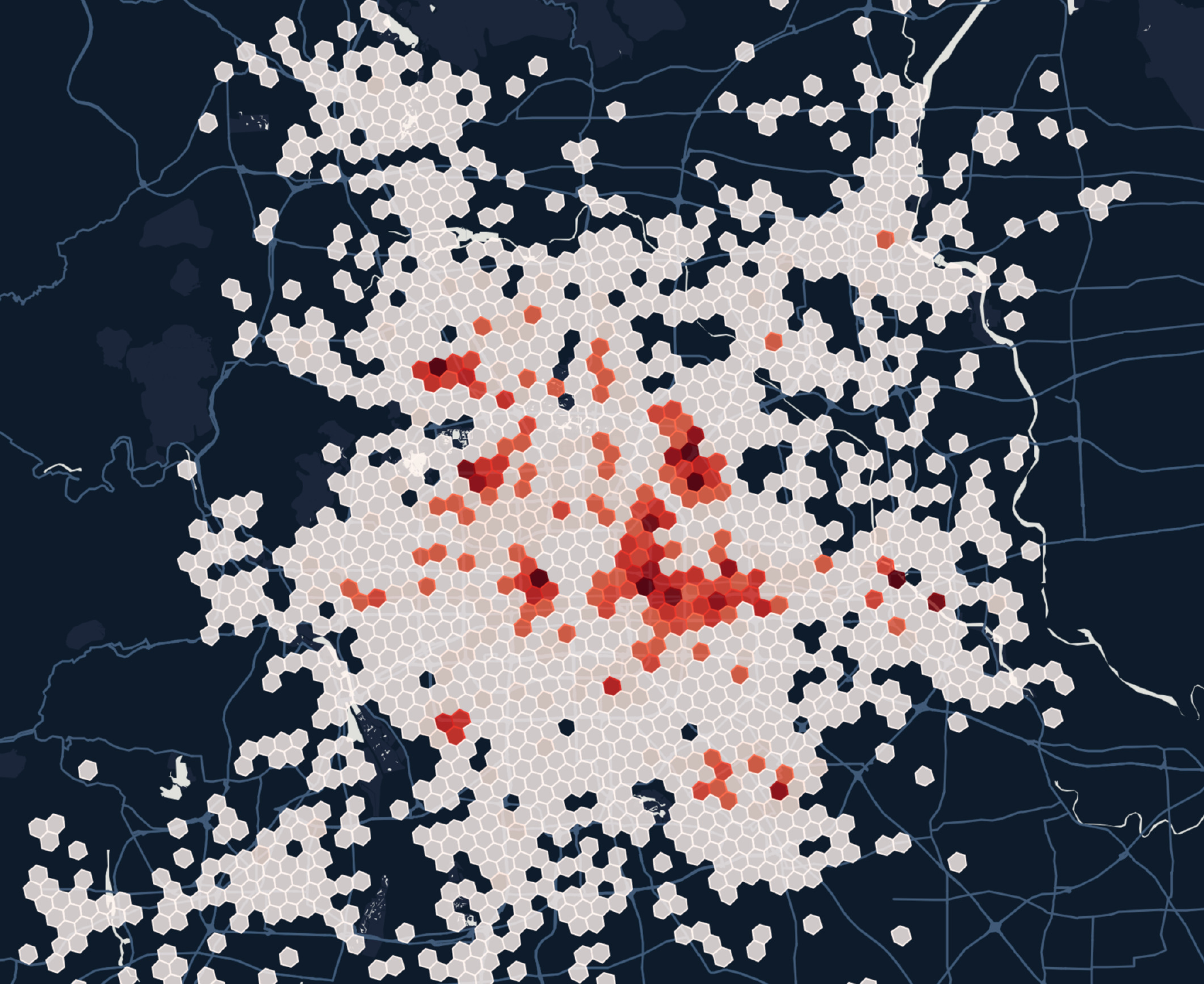}\vspace{-0.2cm}
      \label{fig:even}
    \end{minipage}
  }
  \centering
  \vspace{-0.4cm}
  \caption{The ridesharing order heat maps in Beijing during 3 different periods within one day, where the deeper the red color, the more the ridesharing orders.}
  \label{fig:heat}
\end{figure}

Naturally, at the time and area with denser orders, the platform could adopt a longer dispatching interval to fully exploit the possibility of encountering more shareable and profitable order combinations, which helps improve the platform's profit. However, an excessively long dispatching interval prolongs the passenger waiting time, which will inevitably incur more order cancellations that meanwhile deteriorate the platform's profit as well. Therefore, the dispatching intervals should appropriately adapt to the uneven spatio-temporal order distributions so as to strike a balance between receiving more shareable and profitable order combinations and the increase of order cancellations.

In this paper, dealing with the aforementioned trade-off, we propose a series of algorithms that \textit{generate adaptive dispatching intervals} which \textit{boost the platform's profit with a guaranteed maximum passenger waiting time}. However, designing such algorithms for real-world large-scale ridesharing systems is challenging in various aspects. Next, we would shed some light upon the philosophies behind how we address the various arising challenges. 

Towards achieving the above objectives, the first challenge is how to integrate both the spatial and temporal order distributions into decision making. In this paper, we propose a \textit{hierarchical} approach to address this challenge. Such an approach firstly divides areas suitable to share the same dispatching intervals into a cluster based on the spatial order distribution, and then dynamically decides the dispatching intervals in each spatial cluster by further taking into account the temporal order distribution. Furthermore, in practice, the arrivals and cancellations of ridesharing orders are highly stochastic and rather difficult to accurately predict. Such unpredictability inevitably requires decision making under uncertainty of future information, which is rather challenging. We address this challenge by casting our problem into the framework of \textit{optimal stopping}, which aims at deciding the optimal time instance to end the current dispatching interval in an online manner by leveraging sequentially fed historical order information. However, traditional solutions to optimal stopping cannot be directly applied in our problem setting due to the difficulty in reward calculation. We thus propose a suite of \textit{online adaptive interval algorithms}, which augment existing optimal stopping algorithms with our meticulously designed profit increment as the metric to calculate the reward. Using our adaptive interval algorithms, we are able to make online decisions of dispatching time instances that significantly improve the platform's profit with \textit{partial or even zero future order information}.

In summary, this paper makes the following contributions. 

\begin{itemize}
\item To the best of our knowledge, this paper is the first work that leverages the power of adapting dispatching intervals to the spatio-temporal order distribution for the objective of boosting the platform's profit under a maximum passenger waiting time for large-scale ridesharing.
\item We propose a novel hierarchical approach, which performs spatial clustering of geographical areas followed by online dispatching time instance decision. Specifically, we not only prove that it is impossible to solve the online adaptive interval problem with constant-competitive-ratio algorithms, but also propose a series of adaptive interval algorithms that make online decisions which significantly boost the platform's profit with only partial or zero future knowledge. 
\item We conduct extensive experiments based on a large-scale real-world ridesharing dataset, containing over 3.5 million ridesharing orders received by Didi Chuxing in Beijing, China, from October 1st to 31st, 2018. The experimental results validate the effectiveness of our proposed algorithms.
\end{itemize}

The rest of this paper is organized as follows. Section \ref{Preliminaries} first introduces our model and several fundamental definitions utilized in this paper, and then describes the problems we address and proves their hardness. In Section \ref{Spatial} and \ref{Temporal}, we describe and analyze our proposed algorithms. After describing our simulation results in Section \ref{Experiment}, and presenting the related work in Section \ref{Related}, we conclude this paper in Section \ref{Conclusion}.

\section{Preliminaries}\label{Preliminaries}
In this section, we introduce our ridesharing system model, formally describe the problems we solve, provide the corresponding mathematical formulations, and give an overview of the spatio-temporal hierarchical adaptive dispatching framework.

\subsection{Ridesharing System Model}
In this paper, we consider a ridesharing system managed by a cloud-based platform. In practice, after an order dispatch operation, the platform usually keeps receiving orders for a certain length of time before the next order dispatch. We then define the shortest duration between the platform's two consecutive dispatch operations as a \textit{unit time interval}, and denote it as $\Delta t$. In our model, the platform receives orders at any time, but only dispatches orders at the ends of unit time intervals. 

We define the \textit{active orders} at any time instance as the ones that are undispatched and have not yet been canceled by the passengers. Whenever the platform dispatches orders, it will call an order dispatch algorithm \cite{Alonso2017PANS,Bei2018AAAI,Jin2019CoRide,Li2019MeanField} which assigns active orders to available vehicles that are able to serve them. Clearly, in the order dispatch algorithm, it is essential to determine whether two orders are \textit{shareable} with each other, i.e., whether they can be served by the same vehicle simultaneously. In practice, the platform decides order shareability by jointly considering the detour distance, the pickup time, as well as various other factors. In this paper, we directly use DiDi Chuxing's order dispatch algorithm as a black box. Note that the type of the utilized order dispatch algorithm does not affect the design of the algorithms proposed in this paper. 

Given an active order set $\mathcal{A}$ as the input to the order dispatch algorithm, we use $R(\mathcal{A})$ to denote the \textit{platform's profit} for dispatching orders in $\mathcal{A}$, which is the difference between the gross income charged from all the orders in $\mathcal{A}$ that are served and the platform's payments to the drivers who serve them. 
\begin{figure}[h]
  \centering
  \subfigure[Area divided into cells.]{
    \begin{minipage}{0.48\linewidth}
      \centering
      \includegraphics[width=0.95\linewidth]{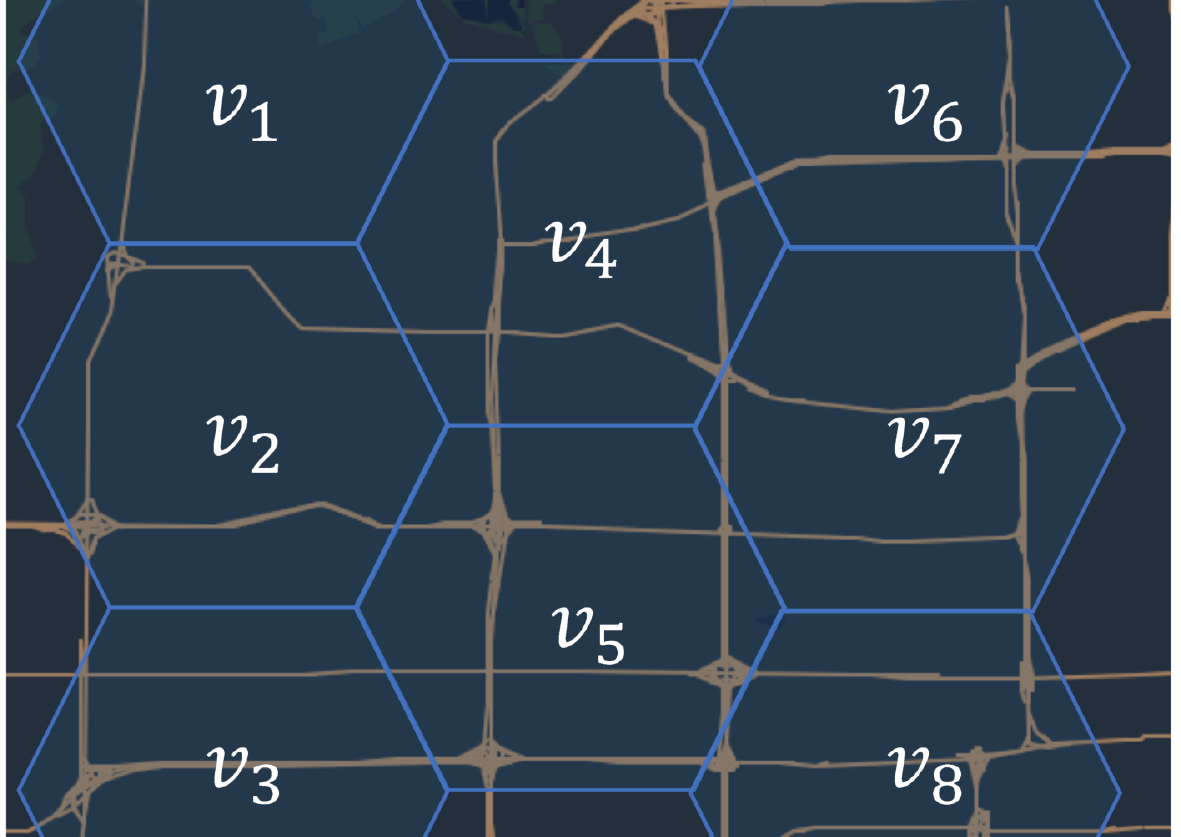}\vspace{0.1cm}
      \label{fig:graph}
    \end{minipage}
  }%
  \subfigure[Spatial shareability graph]{
    \begin{minipage}{0.48\linewidth}
      \centering
      \includegraphics[width=0.95\linewidth]{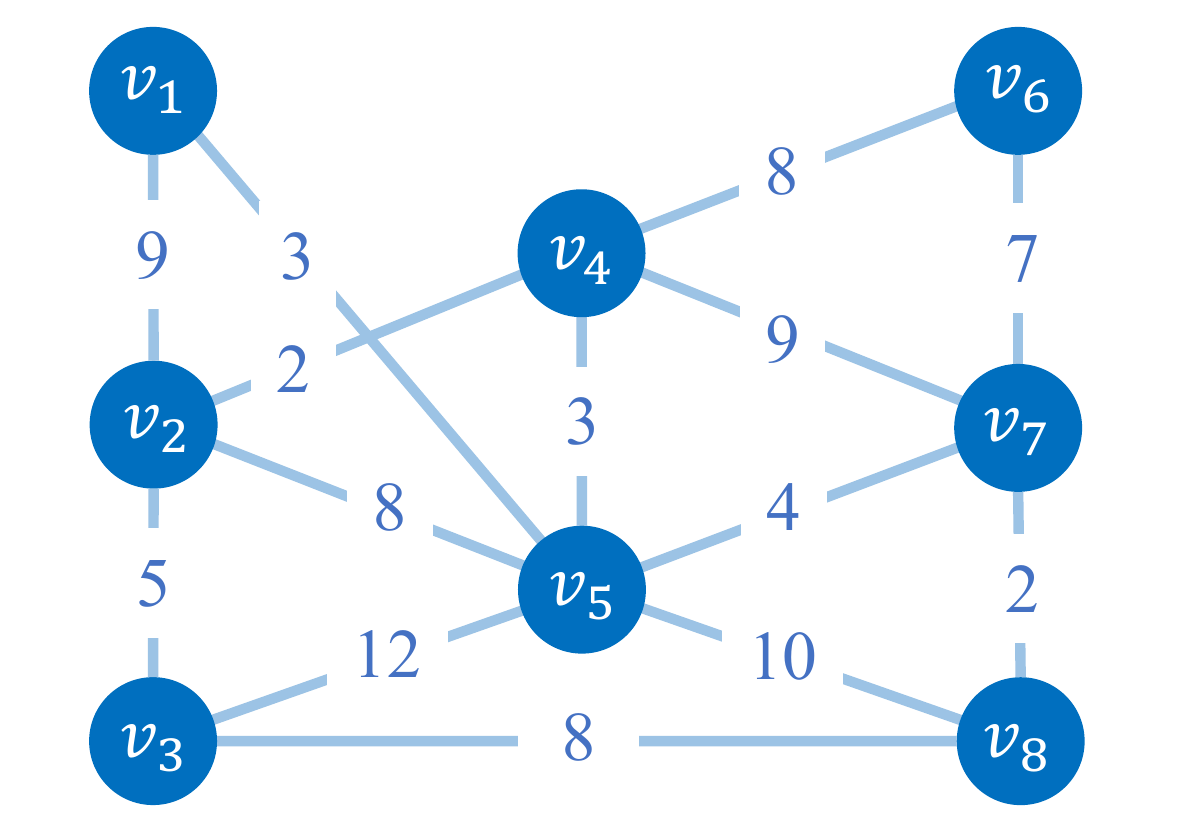}\vspace{0.1cm}
      \label{fig:graph2}
    \end{minipage}
  }%
  \centering
  \vspace{-0.4cm}
  \caption{Figure \ref{fig:graph} shows the geographical area that is divided into hexogon cells. Figure \ref{fig:graph2} shows the spatial shareability graph correponding to Figure \ref{fig:graph}, where the edge weights denote the number of shareable order pairs between two vertices. For example, there are 9 shareable order pairs between $v_1$ and $v_2$ in Figure \ref{fig:graph2}.}
\end{figure}

From the spatial perspective, we divide the entire city into equal-size hexagon cells\footnote{We ensure that the size of a cell satisfies that a driver is able to pick up any passenger in the cell where he locates in a duration short enough to be ignored, and meanwhile the spatial division is not overly fine-grained that introduces two many cells. Considering the above factors, we set the length of each cell as 0.8 km.}, and use a cell as the smallest unit that represents a location in our model. To capture the order shareability relationship among cells, we construct the \textit{spatial shareability graph} defined in the following Definition \ref{def:cellgraph}.

\begin{Def}[Spatial Shareability Graph]\label{def:cellgraph}
Given a historical order set $\mathcal{O}$, a spatial shareability graph is a weighted undirected graph $\mathcal{S}=(\mathcal{V}, \mathcal{E}, w)$, where each vertex $v\in \mathcal{V}$ represents a cell in the city, there exists an edge $(u,v) \in \mathcal{E}$, if there is at least one pair of shareable orders between cells $u$ and $v$ in the order set $\mathcal{O}$, and the function $w:\mathcal{E}\rightarrow\mathbb{Z}^+$ maps an edge $e=(u,v)$ to its weight $w(e)$ that represents the number of shareable order pairs between cells $u$ and $v$ in the order set $\mathcal{O}$. 
\end{Def}

By Definition \ref{def:cellgraph}, a spatial shareability graph is defined over a historical order set, and its edge weights correspond to the numbers of inter-cell shareable order pairs in the order set. Figures \ref{fig:graph} and \ref{fig:graph2} show an example of constructing a spatial shareability graph. Next, in the following Section \ref{sec:statement}, we present our formal statements and mathematical formulations of the problems we solve in this paper based on the models introduced in this section.

\subsection{Problem Statements and Formulations}\label{sec:statement}

Our ultimate objective is to \textit{maximize the platform's profit} over the planning horizon, by adapting the \textit{dispatching interval} (i.e., the duration between two consecutive order dispatches), according to both the spatial and temporal order distributions. 

We thus naturally decouple such problem into a \textit{spatial clustering (SC)} problem which clusters the spatial cells that are suitable to share the same dispatching intervals into the same group, and an \textit{adaptive interval (ADI)} problem which makes online decisions of whether to perform an order dispatch operation at the end of each unit time interval for each constructed spatial cluster. Then, in the rest of this section, we provide formal statements and formulations of the SC and ADI problems. 

\subsubsection{Spatial Clustering Problem}

In this paper, we refer to a connected subgraph of an spatial shareability graph $\mathcal{G}$ as a \textit{spatial cluster} of it. Next, we introduce the concept of \textit{spatial cluster set} in the following Definition \ref{def:ClusterSet}. 

\begin{Def}[Spatial Cluster Set]\label{def:ClusterSet}
Given a spatial shareability graph $\mathcal{G}=(\mathcal{V}, \mathcal{E}, w)$, a spatial cluster set $\mathcal{C}=\{\mathcal{G}_1, \mathcal{G}_2, \cdots\}$ with $\mathcal{G}_i=(\mathcal{V}_i, \mathcal{E}_i, w)$ is a set of spatial clusters (i.e., connected subgraphs) of $\mathcal{G}$, whose vertex sets form a partition of $\mathcal{V}$ (i.e., $\bigcup_{i:\mathcal{G}_i\in\mathcal{C}}\mathcal{V}_i=\mathcal{V}$, and $\mathcal{V}_i\bigcap\mathcal{V}_j=\emptyset, \forall i\not=j$). The set of all the spatial cluster sets of $\mathcal{G}$ is defined as $\mathcal{S}(\mathcal{G})$.
\end{Def}

By Definition \ref{def:ClusterSet}, a spatial cluster set divides the original spatial shareability graph into multiple non-overlapping subgraphs. Then, in the following Definition \ref{def:spatialcluster}, we formally describe the SC problem. 

\begin{Def}[SC Problem]\label{def:spatialcluster}
Given a spatial shareability graph $\mathcal{G}$, the SC problem aims to find the spatial cluster set $\mathcal{C}^*\in\mathcal{S}(\mathcal{G})$ with the maximum total edge weights such that the variance of the edge weights in each spatial cluster $\mathcal{G}_i^*\in\mathcal{C}^*$ is sufficiently small. 
\end{Def}

By such definition, we formulate the SC problem as the following optimization program \textbf{SC}. 
\vspace{-0.1cm}
\begin{align}
  \textbf{SC}:\max_{\mathcal{C}\in\mathcal{S}(\mathcal{G})}& \sum_{i:\mathcal{G}_i\in\mathcal{C}}\sum_{e\in\mathcal{E}_i}w(e)\label{eq:objective}\\
  \text{s.t. }
  &\sigma(\mathcal{E}_i) \leq \theta, \forall \mathcal{G}_i\in\mathcal{C},\label{upperbound}
\end{align}
where
\begin{equation*}\label{eq:costFunc}
\sigma(\mathcal{E}_i)=
\begin{cases}
\begin{aligned}
&\frac{1}{|\mathcal{E}_i|}\sum_{e\in\mathcal{E}_i}\big(w(e)\big)^2-\bigg(\frac{1}{|\mathcal{E}_i|}\sum_{e\in\mathcal{E}_i}w(e)\bigg)^2,&&\text{if~}\mathcal{E}_i\not=\emptyset\\
&0,&&\text{if~}\mathcal{E}_i=\emptyset
\end{aligned}
\end{cases}
\end{equation*}
is the variance of the edge weights in the edge set $\mathcal{E}_i$, $\theta$ denotes the upper bound that we set on such variance, and the objective function $\sum_{i:\mathcal{G}_i\in\mathcal{C}}\sum_{e\in\mathcal{E}_i}w(e)$ represents the sum of the edge weights in the constructed spatial cluster set $\mathcal{C}$. 

Our rationale of upper bounding the variance of the edge weights as in Constraint (\ref{upperbound}) is to ensure that each cluster consists of the cells with similar inter-cell shareability. Consequently, the cells in the same cluster is then suitable to share the same dispatching intervals. Furthermore, by maximizing the sum of the edge weights as in Objective Function (\ref{eq:objective}), we could avoid to the greatest extent that the cells with strong shareability with each other are divided into different clusters.

\subsubsection{Adaptive Interval Problem}After obtaining the spatial cluster set $\mathcal{C}^*$ by solving the optimization program \textbf{SC}, our next task is to generate adaptive dispatching intervals in each of the spatial cluster $\mathcal{G}_c^*\in\mathcal{C}^*$. We consider a planning horizon from time instance $t_0$ to $t_N$ which contains $N$ unit time intervals, and use $t_j$ to represent the ending time instance of the $j$th unit time interval. Next, we formally describe the ADI problem in the following Definition \ref{def:temprob}.

\begin{Def}[ADI Problem]\label{def:temprob}
For each spatial cluster $\mathcal{G}_i^*$, the ADI problem aims to make an online decision at each $t_j\in[t_0, t_N]$ on whether to dispatch orders with the knowledge of the current active order set, denoted as $\mathcal{A}_{i,j}$, and the previous dispatching time instances, with the objective to maximize the platform's profit over the entire planning horizon such that each  dispatching interval contains no more than $\beta$ unit time intervals.
\end{Def}

By such definition, we formulate the ADI problem for spatial cluster $\mathcal{G}_i^*$ as the following optimization program \textbf{ADI}$_i$. 
\vspace{-0.1cm}
\begin{align}
\textbf{ADI}_i:\max&~\sum_{j=1}^N x_j R(\mathcal{A}_{i,j})\\
\text{s.t.}&\sum_{j<k<l}x_j x_l(1-x_k)<\beta, && 1\leq j<l\leq N,\label{beta}\\
&x_j\in\{0,1\}, &&1\leq j \leq N,
\end{align}
where the variable $x_j=1$ means that the platform dispatches orders at time instance $t_j$ and otherwise $x_j=0$, Constraint (\ref{beta}) upper bounds the length of each dispatching interval by $\beta\Delta_t$, and the objective function $\sum_{j=1}^N x_j R(\mathcal{A}_{i,j})$ is exactly the platform's profit in cluster $\mathcal{G}_i^*$ over the planning horizon. 

Our rationale of introducing Constraint (\ref{beta}) is to upper bound the duration during which passengers wait before their orders are dispatched for the benefit of their experience. Furthermore, the above optimization program \textbf{ADI}$_i$ has to be solved in an online manner, because the active order set $\mathcal{A}_{i,j}$ for each time instance $t_j$ will not be revealed to the platform beforehand.

\subsection{Hierarchical Framework Overview}

To address the problems defined in Section \ref{sec:statement}, we propose the spatio-temporal hierarchical adaptive dispatching framework shown in Figure \ref{fig:overview}. 

\begin{figure}[t]
    \centering
    \includegraphics[width=\linewidth]{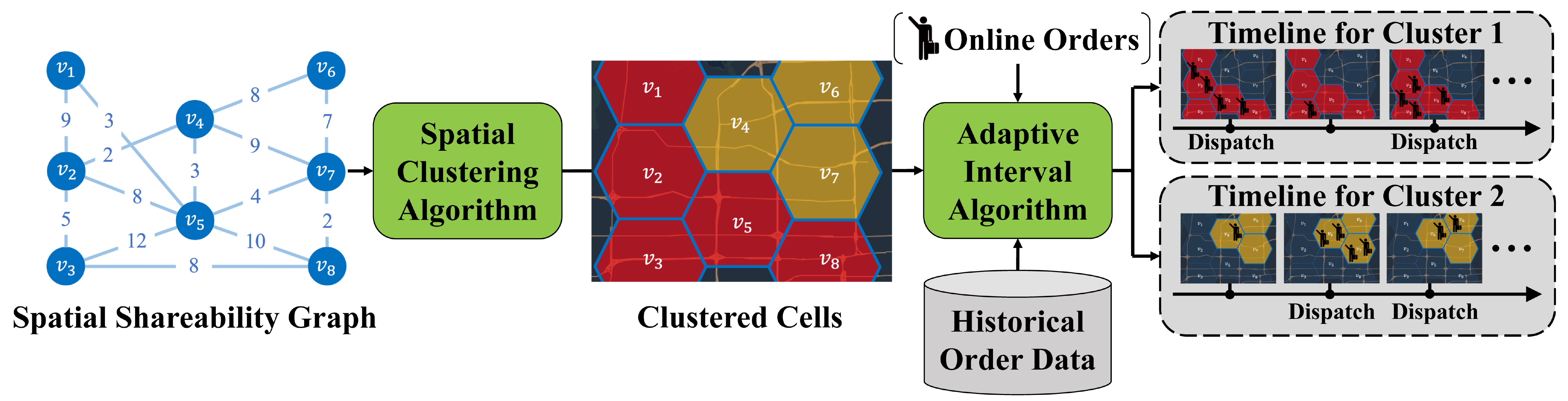}
    \caption{An overview of our spatio-temporal hierarchical adaptive dispatching framework.}
    \label{fig:overview}
    \vspace{-0.3cm}
\end{figure}

Such framework firstly runs the spatial clustering algorithm that solves the SC problem defined in Definition \ref{def:spatialcluster}, which takes as input the spatial shareability graph, and outputs a set of clusters such that the variance of the edge weights in each spatial cluster is sufficiently small. Then, the framework executes the adaptive interval algorithm that solves the ADI problem defined in Definition \ref{def:temprob}, which dynamically decides the dispatching intervals in each spatial cluster.

\section{Spatial Clustering Algorithm}\label{Spatial}
In this section, we describe and analyze our \textit{spatial clustering (SC) algorithm} presented in Algorithm \ref{algo:spatial} that solves the SC problem defined in Definition \ref{def:spatialcluster}, which is inspired by the clustering algorithm proposed in \cite{Oleksandr2006MST}.

\begin{algorithm}[h]
\KwIn{$\mathcal{G}=(\mathcal{V}, \mathcal{E}, w)$, $\theta$\;}
\KwOut{$\mathcal{C}$\;}
\tcp{\small Initialize an empty cluster set.}
$\mathcal{C} \leftarrow \emptyset$\;\label{algo:sp1}
\ForEach{connected component $\mathcal{S}_i$ of $\mathcal{G}$}{\label{algo:sp2}
    $T_i \leftarrow$ The maximum spanning tree of $\mathcal{S}_i$\;\label{algo:sp3}
    \tcp{\small Obtain the cluster set induced from $T_i$.}
    $\mathcal{C}_i \leftarrow$ Edge-Delete($T_i$, $\theta$)\;\label{algo:sp4}
    $\mathcal{C} \leftarrow \mathcal{C} \cup \mathcal{C}_i$\;\label{algo:sp5}
}
\KwRet $\mathcal{C}$\;\label{algo:sp6}
\caption{SC Algorithm}\label{algo:spatial}
\end{algorithm}

\begin{algorithm}
\KwIn{$T_i$, $\theta$\;}
\KwOut{$\mathcal{C}_i$\;}
$C_i\leftarrow \emptyset$\;\label{algo:sp7}
$E\leftarrow$ The edge set of $T_i$\;\label{algo:sp8}
\tcp{\small Get the variance of the edge weights.}
$\sigma(E) \leftarrow \frac{1}{|E|}\sum_{e\in E}\big(w(e)\big)^2 - \big(\frac{1}{|E|} \sum_{e\in E} w(e)\big)^2$\;\label{algo:sp9}
\uIf{$\sigma(E) \le \theta$}{\label{algo:sp10}
  $\mathcal{C}_i \leftarrow T_i$; \tcp{\small $T_i$ already forms a cluster.}\label{algo:sp11}
}
\Else{\label{algo:sp12}
  \tcp{\small Find the edge that decreases the variance the most if removed.}
  $e^* \leftarrow \arg\max_{e\in E}\sigma(E) - \sigma(E\backslash \{e\})$\;\label{algo:sp13}
  $\mathcal{T}_i \leftarrow$ The set of two trees after removing $e^*$ from $T_i$\;\label{algo:sp14}
  \ForEach{tree $T^\prime_j\in\mathcal{T}_i$}{\label{algo:sp15}
    \tcp{\small Obtain the cluster set induced from $T^\prime_j$.}
    $\mathcal{C}_j \leftarrow$ Edge-Delete($T^\prime_j$, $\theta)$\;\label{algo:sp16}
    $\mathcal{C}_i \leftarrow \mathcal{C}_i \cup \mathcal{C}_j$\;\label{algo:sp17}
  }
}
\KwRet $\mathcal{C}_i$\;\label{algo:sp18}
\caption{Edge-Delete Algorithm}\label{algo:ED}
\end{algorithm}

The inputs of Algorithm \ref{algo:spatial} include the spatial shareability graph $\mathcal{G} = (\mathcal{V},\mathcal{E}, w)$ and the threshold $\theta$ of the variance of the edge weights in each spatial cluster. Given the inputs, the \textit{SC} algorithm outputs a spatial cluster set $\mathcal{C}$. First of all, it initializes an empty cluster set $\mathcal{C}$ (line \ref{algo:sp1}). Then for each connected component $\mathcal{S}_i$ of $\mathcal{G}$, the algorithm finds its maximum spanning tree $T_i$ using the \textit{Prim’s algorithm} \cite{prim1957prim} (line \ref{algo:sp3}), and calls Algorithm \ref{algo:ED} to obtain the spatial cluster set $C_i$ induced from $T_i$ (line \ref{algo:sp4}-\ref{algo:sp5}), which is further added to $\mathcal{C}$ (line \ref{algo:sp5}). Finally, the algorithm returns the constructed spatial cluster set $\mathcal{C}$ (line \ref{algo:sp6}).

Algorithm \ref{algo:ED} which is a subroutine of Algorithm \ref{algo:spatial} takes as input a tree $T_i$ and the threshold $\theta$ which is originally fed to Algorithm \ref{algo:spatial}, and obtains the spatial clusters induced from $T_i$ recursively. It first initializes an empty cluster set $\mathcal{C}_i$ (line \ref{algo:sp7}), and gets the edge set $E$ of $T_i$ (line \ref{algo:sp8}). Next it calculates the variance of the edge weights in $E$ (line \ref{algo:sp9}). If the variance is no more than $\theta$, $T_i$ will be added as a spatial cluster (line \ref{algo:sp10}-\ref{algo:sp11}). Otherwise, the algorithm finds the edge that decreases the variance the most if removed, and removes that edge from $T_i$, which transforms the original tree $T_i$ into a set $\mathcal{T}_i$ of two trees (line \ref{algo:sp13}-\ref{algo:sp14}). For each tree $T^\prime_j$ in $\mathcal{T}_i$, the algorithm continues to obtain the cluster set $\mathcal{C}_j$ induced from $T^\prime_j$ recursively. This recursion repeats until the variances of the edge weights of all the trees induced from $T_i$ are no more than $\theta$ (line \ref{algo:sp16}). Then, the algorithm includes $\mathcal{C}_j$ into $\mathcal{C}_i$ (line \ref{algo:sp17}), and finally returns $\mathcal{C}_i$ as the constructed spatial cluster set of $T_i$ (line \ref{algo:sp18}).

The SC algorithm given in Algorithm \ref{algo:spatial} essentially  partitions the spatial shareability graph by deleting edges recursively until the variance of the edge weights in each cluster is sufficiently small. By such construction, the cells in each cluster have similar inter-cell shareablity, and are thus suitable to share the same dispatching intervals.

In practice, the order shareability among the spatial cells usually varies over time, which makes it necessary to construct different spatial shareability graphs in different periods of a day. The length of such period could be one or multiple hours depending on the historical order distributions. Note that the problem of deciding when to construct a new spatial shareability graph is not the focus of this paper, and does not affect the design of our proposed SC algorithm. Next, in the following Theorem \ref{complexityspatial}, we show that Algorithm \ref{algo:spatial} has a polynomial-time computational complexity.

\begin{theorem}\label{complexityspatial}
  Given a spatial shareability graph $\mathcal{G}=(\mathcal{V},\mathcal{E}, w)$ as an input, the worst case computational complexity of Algorithm \ref{algo:spatial} is $O\big(|\mathcal{V}|^3\big)$.
\end{theorem}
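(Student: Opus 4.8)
The plan is to bound separately the two operations performed for each connected component of $\mathcal{G}$ in lines \ref{algo:sp2}--\ref{algo:sp5} — namely the maximum spanning tree computation (line \ref{algo:sp3}) and the invocation of Edge-Delete (line \ref{algo:sp4}) — and then sum over all components, absorbing the $O(|\mathcal{V}|+|\mathcal{E}|)$ cost of enumerating the components into the total. For the spanning trees, Prim's algorithm on a component with $n_i$ vertices runs in $O(n_i^2)$ time with an adjacency-matrix representation; since the components are vertex-disjoint we have $\sum_i n_i \le |\mathcal{V}|$, hence $\sum_i O(n_i^2) \le O\big((\sum_i n_i)^2\big) = O(|\mathcal{V}|^2)$, which is already dominated by the claimed bound.

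The crux is the analysis of the Edge-Delete recursion. First I would bound the number of recursive invocations: every invocation that enters the \texttt{else} branch deletes exactly one edge (line \ref{algo:sp13}--\ref{algo:sp14}), and all trees ever passed to Edge-Delete during the whole run are edge-disjoint subtrees of a spanning forest of $\mathcal{G}$, which has at most $|\mathcal{V}|-1$ edges in total; therefore there are at most $|\mathcal{V}|-1$ internal nodes in the (binary) recursion tree, hence at most $|\mathcal{V}|$ leaves, i.e. $O(|\mathcal{V}|)$ calls overall. Next I would bound the work inside a single call on a subtree with edge set $E$: computing $\sigma(E)$ (line \ref{algo:sp9}) costs $O(|E|)$; finding $e^*=\arg\max_{e\in E}\big(\sigma(E)-\sigma(E\backslash\{e\})\big)$ as written requires evaluating $\sigma(E\backslash\{e\})$ from scratch for each of the $O(|E|)$ edges, at $O(|E|)$ each, for $O(|E|^2)$ total; and removing $e^*$, splitting $T_i$ into the two trees, and the set bookkeeping add only $O(|E|)$. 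Since every such subtree satisfies $|E|\le|\mathcal{V}|-1$, the per-call cost is $O(|\mathcal{V}|^2)$.

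Combining the two bounds, the total Edge-Delete cost is $O(|\mathcal{V}|)\cdot O(|\mathcal{V}|^2)=O(|\mathcal{V}|^3)$, which dominates the $O(|\mathcal{V}|^2)$ spent on spanning trees and component enumeration, yielding the claimed $O(|\mathcal{V}|^3)$ worst-case complexity. I expect the main obstacle to be the recursion-size argument for Edge-Delete: one must argue carefully that the number of recursive invocations is only linear in $|\mathcal{V}|$ — rather than, say, exponential — by charging each internal call to the distinct forest edge it deletes and noting that subtrees handled within the same recursion are vertex-disjoint so their sizes sum to at most $|\mathcal{V}|$. The remaining steps are a routine accounting of elementary per-call costs.
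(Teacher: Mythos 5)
Your proof is correct and reaches the stated bound, but your treatment of the Edge-Delete recursion is genuinely different from the paper's. The paper bounds Edge-Delete on a component $\mathcal{S}_i$ by $O(|\mathcal{V}_i|^2)$ via a worst-case argument about maximally unbalanced splits (one subtree of size $|\mathcal{V}_i|-1$, one singleton at every level), which implicitly prices the work inside each call --- including the $\arg\max$ over candidate edge deletions --- as linear in the subtree size; it then reaches $|\mathcal{V}|^3$ only through a chain of deliberately loose relaxations, $\sum_i O(|\mathcal{V}_i|^2+|\mathcal{E}_i|\log|\mathcal{V}_i|)=\sum_i O(|\mathcal{V}_i|^3)=O(|\mathcal{V}|^3)$. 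You instead bound the number of recursive invocations by charging each internal call to the distinct spanning-forest edge it deletes, giving $O(|\mathcal{V}|)$ calls regardless of how balanced the splits are, and you price the $\arg\max$ at $O(|E|^2)$ per call under the literal from-scratch evaluation of each $\sigma(E\setminus\{e\})$, so your Edge-Delete total is itself $O(|\mathcal{V}|^3)$. Your charging argument is the more robust of the two (the paper's ``assume the worst split at every level'' reasoning is subsumed by it), and your accounting of the $\arg\max$ is the more honest: the paper's $O(|\mathcal{V}_i|^2)$ figure is only justified if one notes that $\sigma(E\setminus\{e\})$ can be recomputed in $O(1)$ from the running sums $\sum_{e}w(e)$ and $\sum_{e}(w(e))^2$, a point neither proof states but which your version does not need. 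One small imprecision: the trees passed to Edge-Delete across different recursion levels are nested rather than edge-disjoint; what you actually use --- that each internal call deletes a distinct forest edge, so there are at most $|\mathcal{V}|-1$ internal calls --- is the correct statement, and you do give it in your closing paragraph.
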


\begin{proof}
  Please refer to Appendix \ref{appendix:complexityspatial} for the detailed proof.
\end{proof}

\section{Adaptive Interval Algorithms}\label{Temporal}
In a real-world ridesharing system, the platform could assign two active orders to the same vehicle before their trips start, or assign an active order to share a vehicle with other orders that already start their trips. We refer to the former ridesharing mode as \textit{pre-trip} ridesharing, and the latter as \textit{in-trip} ridesharing. 

By such definition, we refer to the ADI problem with only pre-trip ridesharing as the \textit{pre-ADI} problem, and that with both pre-trip and in-trip ridesharing as the \textit{in-ADI} problem. In the rest of this section, we firstly analyze the hardness of the aforementioned pre- and  in-ADI problem, and then describe the details of the design and analysis of our proposed algorithms\footnote{Note that although the in-ADI problem focuses on a more general setting than the pre-ADI problem, we study the pre-ADI in this paper, because it serves as preliminaries for and sheds light upon how to address the in-ADI problem.}. Note that this section focuses on one single cluster extracted by the SC algorithm. Thus, we drop the subscript for cluster index for simplicity of presentation.

\subsection{Hardness Analysis}
In this section, we analyze the competitive hardness of the pre- and in-ADI problem. First, we prove the following Lemma \ref{lem1}, which serves as a preliminary result to prove Theorems \ref{thr1} and \ref{thr2}.

\begin{lemma}\label{lem1}
    There exists no deterministic algorithm that solves the pre-ADI problem with a constant competitive ratio.
\end{lemma}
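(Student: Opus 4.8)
The plan is to construct an adversarial input instance that defeats any deterministic online algorithm, exploiting the fact that the algorithm must commit to a dispatch decision at each $t_j$ knowing only the current active order set $\mathcal{A}_j$ and the past, with no information about future orders. The key idea is the classic online "reveal information only after the irrevocable choice" argument: I will design a family of instances that are indistinguishable up to some time instance, but whose optimal continuation depends on which future realization occurs, so that whatever the algorithm does, the adversary can pick the realization that makes the algorithm's profit arbitrarily worse than the offline optimum.

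Concretely, I would work within a single unit time interval budget window of length $\beta$ and consider a small number of time instances, say $t_1,\dots,t_\beta$. At $t_1$, suppose a modest set of mutually shareable orders has arrived, yielding some baseline profit $R(\mathcal{A}_1) = p > 0$ if dispatched immediately. The algorithm must decide: dispatch now, or wait. Case 1: if the algorithm dispatches at $t_1$, the adversary reveals that at $t_2$ a huge batch of orders arrives that would have been highly shareable with those in $\mathcal{A}_1$, so the offline optimum (which waits and dispatches once at $t_2$) collects profit $P \gg p$, giving ratio $P/p \to \infty$. Case 2: if the algorithm waits past $t_1$, the adversary reveals that \emph{all} orders in $\mathcal{A}_1$ get cancelled by $t_2$ and essentially no profitable orders arrive afterward within the $\beta$-window, so the offline optimum (which dispatches at $t_1$) collects $p$ while the algorithm collects at most some negligible amount $\varepsilon$, again forcing the ratio to blow up. Since the two cases are distinguished only by information the online algorithm does not have at $t_1$, no deterministic algorithm can avoid both, so no constant competitive ratio is achievable. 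I would also note that the $\beta$ constraint does not rescue the algorithm: the construction fits inside a single dispatching window, and repeating it across disjoint windows shows the hardness is not an artifact of boundary effects.

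The main obstacle I anticipate is making the two cases genuinely \emph{indistinguishable at the decision point} while respecting the semantics of the model — in particular, ensuring that the profit function $R(\cdot)$ and the cancellation dynamics can simultaneously realize "orders in $\mathcal{A}_1$ become enormously valuable if held" and "orders in $\mathcal{A}_1$ vanish if held," depending only on post-$t_1$ arrivals and cancellations. This requires care about what the black-box dispatch algorithm and the shareability relation are allowed to do; I would handle it by keeping $\mathcal{A}_1$ a single small cluster of orders whose shareability with a large future batch is plausible (they originate from nearby cells), and by invoking the stochastic, hard-to-predict nature of cancellations asserted in the model to justify the full-cancellation branch. A secondary technical point is to formalize "arbitrarily bad ratio": I would phrase it as showing that for every constant $c \ge 1$ and every deterministic algorithm $\mathrm{ALG}$, there is an instance on which $\mathrm{OPT} > c \cdot \mathrm{ALG}$, by scaling $P$ (in Case 1) or shrinking $\varepsilon$ (in Case 2) as needed. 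This lemma then feeds directly into Theorems~\ref{thr1} and~\ref{thr2}, presumably by embedding the same gadget and arguing that randomization or the in-trip extension still cannot circumvent the information-theoretic barrier.
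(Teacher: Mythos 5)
Your proposal is correct and is in fact a strict superset of the paper's argument. The paper's proof uses only your Case~2: the adversary presents a single order $r$ that is cancelled in the unit time interval after it is raised, so an algorithm that declines to dispatch immediately earns zero while the offline optimum earns $\textnormal{Profit}_r$, forcing the ratio to $0$. Crucially, the paper simply asserts that the adversary can choose $r$ so that the algorithm ``would decide not to dispatch immediately,'' which quietly excludes deterministic algorithms that always dispatch at the end of every unit time interval; against such an algorithm the cancellation gadget alone proves nothing. Your Case~1 --- a later batch whose profit comes almost entirely from being shared with the orders already present at $t_1$, so that dispatching early forfeits an arbitrarily large sharing bonus --- is exactly what is needed to close that gap, and it is realizable in the model because $R(\cdot)$ is gross income minus driver payments, so the profit of a shared assignment can dwarf the sum of the profits of its parts. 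The one detail you should make explicit is the accounting in Case~1: the early-dispatching algorithm still gets to serve the later batch $B$ on its own at $t_2$, so the quantity you must drive to infinity is $R(\mathcal{A}_1\cup B)/\bigl(R(\mathcal{A}_1)+R(B)\bigr)$, which you can do by making each order nearly worthless in isolation but highly profitable when shared. With that pinned down, your two-branch adversary yields a complete proof, whereas the paper's one-branch version needs your Case~1 (or an implicit restriction on the class of algorithms considered) to be airtight.
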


\begin{proof}
Please refer to Appendix \ref{appendix:lem1} for the detailed proof. 
\end{proof}

Based on Lemma \ref{lem1}, we provide in the following Theorem \ref{thr1} the complete statement of the pre-ADI problem's competitive hardness. 

\begin{theorem}\label{thr1}
    There exists neither a deterministic, nor a randomized algorithm that solves the pre-ADI problem with a constant competitive ratio.
\end{theorem}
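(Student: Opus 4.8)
The plan is to bootstrap from Lemma~\ref{lem1} using Yao's minimax principle. Since Lemma~\ref{lem1} already rules out constant-competitive \emph{deterministic} algorithms, the only new content is the randomized case, and the standard route is to exhibit a single distribution $\mathcal{D}$ over pre-ADI instances against which \emph{every} deterministic online algorithm earns, in expectation, at most a $1/\rho$ fraction of the expected optimum for an arbitrarily large $\rho$; this then lower-bounds the competitive ratio of every randomized algorithm against an oblivious adversary.

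First I would lift the adversarial gadget from the proof of Lemma~\ref{lem1} to a parametrized family. Fix a large integer $k$ and build instances $I_1,\dots,I_k$ on a common planning horizon that are pairwise identical up to a certain decision point and differ only in \emph{when} the adversary's ``jackpot'' -- an active order set of large profit $M$ -- becomes collectible: in $I_m$ it can be collected only by a dispatch at the $m$-th of $k$ designated slots, and every other dispatch yields negligible profit, exactly as the mechanism of Lemma~\ref{lem1} is arranged to make the choice of dispatch slot binding. The care here is to keep each $I_m$ a \emph{legal} pre-ADI instance: orders must arrive and be canceled over time consistently with the model, the rewards $R(\mathcal{A}_j)$ must be realizable by the black-box dispatch algorithm, and no dispatching interval may exceed $\beta$ unit intervals -- in particular it must be this model-imposed deadline, not an artificial device, that forces a commitment before the jackpot's location is disclosed. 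Then $\mathsf{OPT}(I_m)\ge M$ for every $m$, while any deterministic algorithm, seeing the same prefix for all $I_m$, behaves identically up to the decision point and so collects the jackpot for at most one value of $m$; hence, averaged over $m$, its profit is at most $M/k$ plus a negligible term.

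Take $\mathcal{D}$ uniform on $\{I_1,\dots,I_k\}$, so that every deterministic algorithm satisfies $\mathbb{E}_{I\sim\mathcal{D}}[\mathsf{ALG}(I)]\le M/k+o(M/k)$. A randomized algorithm with coin string $r$ is a mixture of deterministic algorithms $\mathsf{ALG}(\cdot,r)$, whence
\begin{align*}
\mathbb{E}_{I\sim\mathcal{D}}\big[\mathbb{E}_r[\mathsf{ALG}(I,r)]\big]
&=\mathbb{E}_r\big[\mathbb{E}_{I\sim\mathcal{D}}[\mathsf{ALG}(I,r)]\big]\\
&\le \frac{M}{k}+o\!\left(\frac{M}{k}\right).
\end{align*}
Therefore some index $m$ has $\mathbb{E}_r[\mathsf{ALG}(I_m,r)]\le M/k+o(M/k)$, and since $\mathsf{OPT}(I_m)\ge M$ the randomized algorithm is at best $(1-o(1))k$-competitive on $I_m$. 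Letting $k\to\infty$ shows that no randomized algorithm attains a constant competitive ratio; together with Lemma~\ref{lem1} this proves the theorem.

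I expect the work in the second step -- the instance construction -- to be the only real obstacle, the Yao reduction being routine. One must engineer the arrival/cancellation stream and the induced profits so that (i) exactly one designated slot carries the jackpot in each $I_m$, (ii) the $\beta$ deadline genuinely makes the slot choice irrevocable before the jackpot position is revealed, and (iii) the $I_m$'s are statistically indistinguishable to the algorithm until after it must commit, so that a uniform guess is optimal. If the Lemma~\ref{lem1} gadget is already a ``guess-the-jackpot-slot'' construction, most of this is inherited: the family is simply its $k$ translates together with the uniform mixture, and what remains is the bookkeeping that the negligible non-jackpot contributions are dominated by $M/k$.
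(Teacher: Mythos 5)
Your overall architecture --- Lemma \ref{lem1} for the deterministic case, then Yao's principle applied to a uniform distribution over a family of instances that differ only in when the profitable dispatch opportunity occurs --- is exactly the paper's route, and your Yao reduction (a randomized algorithm is a mixture of deterministic ones, so some fixed instance witnesses the bad ratio) is correct and routine. The gap is the instance construction, which you rightly identify as the crux but do not carry out, and the version you sketch would fail. In the pre-ADI model the reward of dispatching at slot $j$ is $R(\mathcal{A}_j)$, a function of the currently active order set, and that active set is precisely what the algorithm observes. Hence your requirements (i) and (iii) are jointly unsatisfiable: if $I_1,\dots,I_k$ are indistinguishable through slot $m$, their active sets at slot $m$ coincide, so a dispatch at slot $m$ earns the same profit on all of them; with equal jackpots $M$, the algorithm that dispatches at the end of every unit interval collects $M$ on every instance and is $1$-competitive against your distribution. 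The fallback of taking $k$ translates of the Lemma \ref{lem1} gadget also fails, because arrivals are observable: dispatching immediately whenever a new order appears collects every single-order jackpot. (The Lemma \ref{lem1} adversary escapes this only by being adaptive to the algorithm's decisions, which an oblivious Yao distribution cannot be.)

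The missing idea, which is the paper's construction, is to make the \emph{value} of the opportunity grow over time rather than move its location. The paper posts one anchor order $r_0$ at time $0$ that is shareable with each of a stream of mutually unshareable orders $r_1,\dots,r_n$ arriving one per slot, with pairing profits $\textnormal{Profit}_{(r_0,r_i)}$ strictly increasing in $i$; the instance $X_i$, drawn uniformly, cancels $r_0$ right after the $i$th slot. A deterministic rule whose first dispatch (while $r_0$ is still alive) is at slot $j$ earns $\textnormal{Profit}_{(r_0,r_j)}$, but only on the $(n-j+1)/n$ fraction of instances on which $r_0$ survives that long, whereas the offline optimum earns $\textnormal{Profit}_{(r_0,r_i)}$ on $X_i$. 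Choosing $n$ and the profits so that $(n-j+1)\,\textnormal{Profit}_{(r_0,r_j)}$ is uniformly small compared with $\sum_{i}\textnormal{Profit}_{(r_0,r_i)}$ (for instance $\textnormal{Profit}_{(r_0,r_j)}\propto 1/(n-j+1)$) drives the ratio below any prescribed constant for every deterministic stopping rule simultaneously. Your proof goes through once your equal-jackpot family is replaced by a construction of this kind.
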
 

\begin{proof}
Please refer to Appendix \ref{appendix:thr1} for the detailed proof. 
\end{proof}

Similarly, in the following Theorem \ref{thr2}, we show the competitive hardness of the in-ADI problem.

\begin{theorem}\label{thr2}
    There exists neither a deterministic, nor a randomized algorithm that solves the in-ADI problem with a constant competitive ratio.
\end{theorem}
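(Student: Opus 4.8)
\textbf{Proof proposal for Theorem \ref{thr2}.}

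The plan is to leverage Theorem \ref{thr1} by a reduction argument: since the in-ADI problem strictly generalizes the pre-ADI problem, any hardness for pre-ADI should transfer, but some care is needed because in-trip ridesharing gives the algorithm additional ways to earn profit, so a pre-ADI adversary instance is not immediately an in-ADI adversary instance. First I would construct a family of in-ADI instances in which the in-trip ridesharing option is effectively disabled or neutralized — for example, by choosing the spatial/temporal configuration of orders so that no order is ever shareable with an already-dispatched (in-trip) order (e.g., all trips in a batch have origins and destinations that make detours infeasible relative to vehicles already en route). On such instances the in-ADI problem coincides exactly with the pre-ADI problem, and the profit function $R(\mathcal{A}_j)$ behaves identically. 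This lets me port the adversarial construction from Theorem \ref{thr1} verbatim.

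The key steps, in order, are: (i) recall the adversarial order-arrival pattern used in the proof of Theorem \ref{thr1} (and Lemma \ref{lem1}) that forces any deterministic algorithm to either dispatch too early, missing a large shareable batch, or wait too long and be forced by the $\beta$-constraint to dispatch at a low-profit instant, with the ratio between the two outcomes unbounded; (ii) embed this pattern into an in-ADI instance where, as described above, the in-trip mode provides no extra matching opportunities, so that the optimal in-ADI profit equals the optimal pre-ADI profit and every online in-ADI algorithm's profit equals that of the corresponding online pre-ADI algorithm; (iii) conclude that no deterministic in-ADI algorithm attains a constant competitive ratio; (iv) extend to randomized algorithms via Yao's minimax principle exactly as in Theorem \ref{thr1} — define a distribution over the neutralized instances on which every deterministic algorithm has unbounded expected competitive ratio, which bounds the performance of any randomized algorithm.

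The main obstacle I anticipate is step (ii): rigorously guaranteeing that the in-trip ridesharing capability cannot be exploited to improve either the online algorithm's or the offline optimum's profit on the constructed instances. I would need to argue, using the model's shareability criteria (detour distance and pickup time), that for the chosen order geometry no active order can be matched to a vehicle already serving a dispatched order — equivalently, that the in-trip shareability graph restricted to these instances has no edges between newly arrived orders and previously dispatched ones. If a fully ``inert'' construction is awkward, a fallback is to add the in-trip contributions as a bounded additive term that is identical for the online algorithm and the optimum, so it cancels in the competitive-ratio analysis; either way the ratio remains unbounded. Once step (ii) is secured, the rest is a direct transcription of the proof of Theorem \ref{thr1}.
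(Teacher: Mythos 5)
Your proposal is correct and takes essentially the same route as the paper: the paper likewise reuses the single-order adversary from Lemma~\ref{lem1} for the deterministic case, and the input distribution $\mathcal{X}$ from Theorem~\ref{thr1} together with Yao's principle for the randomized case, after first neutralizing in-trip ridesharing so the instance collapses to a pre-ADI instance. The only difference is the neutralization mechanism for your step (ii): rather than arguing about detour geometry, the paper simply observes that once $r_0$ is dispatched together with some $r_i$ the vehicle is at capacity, so no in-trip sharing can ever occur on these instances — a cleaner way to secure the reduction.
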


\begin{proof}
Please refer to Appendix \ref{appendix:thr2} for the detailed proof. 
\end{proof}

Based on Theorems \ref{thr1} and \ref{thr2}, no algorithms could solve the pre- and in-ADI problem with a constant competitive ratio. We thus propose our online algorithms in the following Sections \ref{prealg} and \ref{sec:in-1/e}, which show good performances in our experiments.

\subsection{Algorithms for the pre-ADI Problem}\label{prealg}

Note that the pre-ADI problem belongs to the category of finite horizon optimal stopping problems, which aim to choose the optimal time instances to take specific actions based on sequentially obtained knowledge in a finite time horizon. Hence, we adapt the existing $1/e$ law, and backward induction-based algorithm for the optimal stopping problem into the \textit{$1/e$-pre-ADI algorithm} in Section \ref{sec:1/e} and \textit{BI-pre-ADI algorithm} in Section \ref{sec:BI}, respectively, to solve the pre-ADI problem.

\subsubsection{$1/e$-pre-ADI Algorithm}\label{sec:1/e}


Since the underlying temporal distribution of future orders is unknown to the platform, it is not feasible to directly optimize the platform's profit over the entire planning horizon at the end of each unit time interval. To address this problem, we propose to maximize the \textit{profit increment}, defined in the following Definition \ref{def:profitinc}, at each time instance instead. Before giving the formal definition of the profit increment, we define the platform's profit in any dispatching interval from $t_j$ to $t_k$ as $R_{j+1,k}$.

\begin{Def}[Profit Increment]\label{def:profitinc}
Given the last dispatching time instance being $t_l$, the increment in profit by postponing dispatching time by one unit time interval from $t_k$ to $t_{k+1}$, denoted by $P_{l,k,k+1}$, is defined as
\begin{equation*}
    P_{l,k,k+1} = R_{l+1,k+1}-(R_{l+1,k}+R_{k+1,k+1}).
\end{equation*}
Generally, we define the aggregate profit increment of postponing dispatching by one unit time interval at a time from $t_{l+1}$ to $t_j$ as
\begin{equation*}
    P_{l,l+1,j} = \sum_{k=l+1}^{j-1}P_{l,k,k+1} = R_{l+1,j} - \sum_{k=l+1}^j R_{k,k},
\end{equation*}
which is equivalent to the increment in profit between adopting the dispatching interval from $t_l$ to $t_j$ and dispatching uniformly at the end of each unit time interval between $t_l$ and $t_j$.
\end{Def}

By such definition, the profit increment $P_{l,k,k+1}$ can be negative, if the order cancellations in the $(k+1)$th unit time interval lead to more decrease in the profit. Thus, the total profit increment $P_{l,l+1,j}$ can decrease if postponing dispatching causes severer order cancellations, and it will increase if more shareable order pairs which can bring about more net profit are collected. The profit increment measures the increase in profit between dispatching by the current adaptive dispatching interval and dispatching by uniform dispatching intervals. By maximizing the profit increment, we increase the platform's profit to approach its maximum.

\begin{algorithm}
\KwIn{$\beta$, $\Delta t$, $t_N$, $t_l$, $t_c$, $P_{l,l+1,l+1},\cdots,P_{l,l+1,c}$\;}
\KwOut{$x_c$\;}
\tcp{\small Find the deadline for dispatching time.}
$t_{\max}\leftarrow \min\{t_N, t_l+\beta\Delta t\}$\;\label{algo:1/e1}
\tcp{\small Find the threshold time.}
$t_{\theta}\leftarrow t_l+\lceil\frac{t_{\max}-t_l}{e}\rceil$\;\label{algo:1/e2}
\uIf{$t_c < t_\theta$}{\label{algo:1/e3}
    $x_c\leftarrow 0$; \tcp{\small Threshold time not reached at $t_c$.} \label{algo:1/e4}
}
\uElseIf{$t_c = t_{\max}$}{\label{algo:1/e5}
    $x_c\leftarrow 1$; \tcp{\small Dispatching deadline reached at $t_c$.}\label{algo:1/e6}
}
\Else{\label{algo:1/e7}
    \uIf{$P_{l,l+1,c}>\max_{i\in \{l+1,\cdots, c-1\}}P_{l,l+1,i}$}{\label{algo:1/e8}
        $x_c\leftarrow 1$; \tcp{\small Profit increment maximized at $t_c$.} \label{algo:1/e9}
    }
    \Else{\label{algo:1/e10}
        $x_c\leftarrow 0$; \hfill \tcp{\small Profit increment not maximized at $t_c$.} \label{algo:1/e11}
    }
}
\KwRet $x_c$\;\label{algo:1/e12}
\caption{$1/e$-pre-ADI algorithm}\label{algo:1/e}
\end{algorithm}

Note that the pre-ADI problem is a variant of the classic secretary problem \cite{Gilbert:Mosteller:1966}. Motivated by the optimal stopping rule for classic secretary problem, we propose the \textit{$1/e$-pre-ADI algorithm} in Algorithm \ref{algo:1/e}, which chooses not to dispatch during the first $s$ unit time intervals, and afterwards dispatches at the end of the first unit time interval that has the maximum profit increment among all observed ones. According to the $1/e$ law \cite{Gilbert:Mosteller:1966}, we let $s=\left\lceil\frac{1}{e}\times\beta\right\rceil$, where $\beta$ represents the largest number of unit time intervals within a dispatching interval\footnote{Note that the parameter $\beta$ is introduced to ensure user experience by avoiding the dispatching interval to be excessively long.}, to increase the probability of dispatching at the optimal time instance.

The $1/e$-pre-ADI algorithm takes as inputs, the maximum number of unit time intervals in a dispatching interval $\beta$, the length of a unit time interval $\Delta t$, the beginning and ending time of the time period $t_0$ and $t_N$, the last dispatching time instance $t_l$, the current time instance $t_c$, and all profit increments in the current dispatching interval until the current time instance $P_{l,l+1,l+1},\cdots,P_{l,l+1,c}$. First of all, the algorithm finds the possible maximum ending time instance $t_{\max}$ of the current dispatching interval (line \ref{algo:1/e1}). Then, it obtains the threshold time instance $t_{\theta}$ (line \ref{algo:1/e2}). If the current time instance does not reach the threshold time instance, the algorithm decides not to dispatch regardless of the profit increment (line \ref{algo:1/e3}-\ref{algo:1/e4}). If the current time instance reaches the possible maximum ending time instance of the current dispatching interval, the algorithm decides to dispatch (line \ref{algo:1/e5}-\ref{algo:1/e6}). In other cases, Algorithm \ref{algo:1/e} compares the total profit increment of the current unit time interval $P_{l,l+1,c}$ with all the other profit increments in the current dispatching interval, and decides to dispatch if $P_{l,l+1,c}$ is the maximum, but not to dispatch otherwise (line \ref{algo:1/e7}-\ref{algo:1/e11}). 

In fact, Algorithm \ref{algo:1/e} decides whether to dispatch sequentially at the end of every unit time interval based on all orders raised since the last dispatch operation, and maximizes the probability of dispatching at the time instances that maximize the profit increment. Next, we analyze the computational complexity of Algorithm \ref{algo:1/e} in the following Theorem \ref{complexitye}. 

\begin{theorem}\label{complexitye}
    The computational complexity of Algorithm \ref{algo:1/e} is $O(1)$.
\end{theorem}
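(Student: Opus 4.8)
The plan is to bound the running time of each line of Algorithm \ref{algo:1/e} in isolation and then sum the bounds, observing that a single invocation touches only a constant number of lines. Lines \ref{algo:1/e1} and \ref{algo:1/e2} each perform a fixed number of elementary arithmetic operations (a subtraction, a multiplication by the scalar $\beta$, a division, a ceiling, an addition, and a $\min$), so each costs $O(1)$. The branching in lines \ref{algo:1/e3}--\ref{algo:1/e6} likewise consists of a constant number of scalar comparisons together with one assignment to $x_c$, hence also $O(1)$, and lines \ref{algo:1/e9}--\ref{algo:1/e12} are again a constant number of assignments and a return.

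The only step whose cost is not visibly constant is line \ref{algo:1/e8}, where the algorithm compares $P_{l,l+1,c}$ against $\max_{i\in\{l+1,\ldots,c-1\}}P_{l,l+1,i}$. Evaluated naively this maximum takes time proportional to $c-l-1$, the number of profit increments observed so far in the current dispatching interval. Here I would invoke the defining constraint of the (pre-)ADI problem that every dispatching interval contains at most $\beta$ unit time intervals, which gives $c-l\le\beta$; since $\beta$ is a fixed parameter of the problem rather than a quantity that grows with the input, the maximum ranges over at most $\beta-2=O(1)$ values and is computable in $O(1)$ time. I would additionally remark that, equivalently, since Algorithm \ref{algo:1/e} is invoked once at the end of each unit time interval, one can carry the running maximum $\max_{i\le c-1}P_{l,l+1,i}$ as state across invocations, so each invocation performs just one extra comparison irrespective of $\beta$; this is the implementation actually used, and it makes the $O(1)$ bound immediate.

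Summing the $O(1)$ contributions of all the (constantly many) executed lines yields an overall per-invocation running time of $O(1)$, which is exactly the claim. The only point requiring care — and where I expect a careful reader to push back — is line \ref{algo:1/e8}: the argument must state explicitly either that $\beta$ is a constant, so the list of stored profit increments has constant length, or that the running maximum is maintained incrementally across the $N$ invocations; either justification closes the gap, and I would present the former for brevity while noting the latter.
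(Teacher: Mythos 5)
Your proof is correct and follows essentially the same route as the paper's: every line is elementary except the maximum in line \ref{algo:1/e8}, which is bounded by the at most $\beta$ profit increments in a dispatching interval, with $\beta$ treated as a fixed constant. Your additional remark about maintaining a running maximum across invocations is a nice implementation note but not needed for the claim.
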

\begin{proof}
    Given all the profit increments calculated in advance, comparing it to all previous values in the current dispatching interval takes at most $\beta$ times, and all the other conditional statements can be executed in $O(1)$ time. Hence, the total time complexity of the $1/e$-pre-ADI algorithm is $O(1)$, which is constant.
\end{proof}

\subsubsection{BI-pre-ADI Algorithm}\label{sec:BI}
Different from the $1/e$-pre-ADI algorithm which is oblivious about the future, in this section, we propose the \textit{BI-pre-ADI algorithm} for the scenario where the platform has \textit{partial knowledge} on orders to be raised and canceled in the future. The partial knowledge that we incorporate refers to the prediction of the distribution of the future profits based on historical order data\footnote{Note that our BI-pre-ADI algorithm works with any prediction method, and exactly which method should be utilized to achieve the best prediction performance is out of the scope of this paper.}. 

\begin{algorithm}
\KwIn{$\beta$, $\Delta t$, $t_N$, $t_l$, $t_c$, $P_{l,l+1,c}$, $\overline{V}_{l, c+1}$\;}
\KwOut{$x_c$\;}
\tcp{Find the deadline for dispatching time.}
$t_{\max}\leftarrow \min\{t_l+\beta\Delta t, t_N\}$\;\label{algo:BI1}
\uIf{$t_c=t_{\max}$}{\label{algo:BI2}
    $x_c\leftarrow 1$; \tcp{\small Dispatching deadline reached at $t_c$.} \label{algo:BI3}
}
\Else{\label{algo:BI4}
    \uIf{$P_{l,l+1,c} \geq \overline{V}_{l, c+1}$}{\label{algo:BI5}
        $x_c\leftarrow 1$; \tcp{\small Higher profit increment than future.} \label{algo:BI6}
    }
    \Else{\label{algo:BI7}
        $x_c\leftarrow 0$; \tcp{\small Lower profit increment than future.} \label{algo:BI8}
    }
}
\KwRet $x_c$\;\label{algo:BI9}
\caption{BI-pre-ADI algorithm}\label{algo:BI}
\end{algorithm}


Given the current time instance $t_c$, and the last dispatching time instance $t_l$, we introduce a sequence of random variables, denoted by $V_{l,l+\beta}, \cdots, V_{l,c}$, which are defined in the following Definition \ref{def:value}.

\begin{Def}[Value Sequence]\label{def:value} 
We define the last term in the value sequence to be
\begin{equation}
    V_{l, l+\beta}=P_{l,l+1,l+\beta},
\end{equation}
which is the profit increment at $t_l+\beta\Delta t$, and then we inductively define the other terms in the sequence as
\begin{equation}
    V_{l, j}=\max\big\{P_{l,l+1,j}, \mathbb{E}[V_{l,j+1}]\big\},\forall j\in\{l+\beta-1,\cdots, c+1\}, 
\end{equation}
where, for each $j\in\{c+1,\cdots,l+\beta\}$, $P_{l,l+1,j}$ is a random variable at time instance $t_c$.
\end{Def}

By such definition, the value $V_{l,j}$ essentially represents the maximum profit increment the platform can possibly obtain starting from time instance $t_j$, and we use our estimate of $\mathbb{E}[V_{l, c+1}]$ from the historical order dataset\footnote{In this paper, we approximate $\mathbb{E}[V_{l, c+1}]$ by the average of historical values calculated from the historical order data, because ridesharing orders typically follow a strong temporal pattern. Note that other estimation methods could also be applied to obtain $\overline{V}_{l, c+1}$.}, denoted as $\overline{V}_{l, c+1}$, as one of the inputs to the BI-pre-ADI algorithm. 

Thus, apart from the same inputs taken by Algorithm \ref{algo:1/e}, the BI-pre-ADI algorithm given in Algorithm \ref{algo:1/e} also takes as inputs the profit increment of the current unit time interval $P_{l,l+1,c}$, and the estimate $\overline{V}_{l, c+1}$ of the expected value $\mathbb{E}[V_{l, c+1}]$. At the current time instance $t_c$, the algorithm firstly finds the possible maximum ending time instance of the current dispatching interval (line \ref{algo:BI1}). If the current time instance reaches the maximum ending time instance, the algorithm decides to dispatch orders immediately (line \ref{algo:BI2}-\ref{algo:BI3}). If not, it compares the observed profit increment of the current unit time interval $P_{l,l+1,c}$ with $\overline{V}_{l, c+1}$, and decides to dispatch only if $P_{l,l+1,c}$ is no less than $\overline{V}_{l, c+1}$ (line \ref{algo:BI4}-\ref{algo:BI8}). 

Next, in the following Theorem \ref{complexityBI}, we analyze the computational complexity of Algorithm \ref{algo:BI}.

\begin{theorem}\label{complexityBI}
    The computational complexity of Algorithm \ref{algo:BI} is $O(1)$.
\end{theorem}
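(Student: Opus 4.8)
The plan is to bound the running time of Algorithm \ref{algo:BI} by walking through it line by line, since it contains no loop and therefore executes only a fixed number of primitive operations. First I would observe that line \ref{algo:BI1} evaluates $t_{\max}=\min\{t_l+\beta\Delta t, t_N\}$, which is one multiplication, one addition, and one comparison on quantities already supplied as inputs, hence $O(1)$. Next, the guard $t_c=t_{\max}$ in line \ref{algo:BI2} and the assignment in line \ref{algo:BI3} together cost $O(1)$. In the else branch, line \ref{algo:BI5} performs the single scalar comparison $P_{l,l+1,c}\geq\overline{V}_{l,c+1}$, and exactly one of the $O(1)$ assignments in lines \ref{algo:BI6} and \ref{algo:BI8} is executed; finally line \ref{algo:BI9} returns $x_c$ in $O(1)$. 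Summing these constantly many constant-cost steps yields the claimed $O(1)$ bound.

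The only point worth making explicit — and the reason this argument is even shorter than the analogous one for Algorithm \ref{algo:1/e} in Theorem \ref{complexitye} — is that Algorithm \ref{algo:BI} never scans the history of previously observed profit increments: whereas the $1/e$-pre-ADI algorithm compares $P_{l,l+1,c}$ against $\max_{i\in\{l+1,\dots,c-1\}}P_{l,l+1,i}$ and thus may pay up to $\beta$ comparisons, the BI-pre-ADI algorithm replaces that running maximum with the single precomputed threshold $\overline{V}_{l,c+1}$. Hence there is genuinely no loop to charge for, and the total work is bounded by an absolute constant independent of $\beta$, $N$, or the size of the active order sets.

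I would also state, exactly as in Theorem \ref{complexitye}, the scoping assumption behind the bound: $P_{l,l+1,c}$ and $\overline{V}_{l,c+1}$ enter Algorithm \ref{algo:BI} as inputs, so the work of computing $P_{l,l+1,c}$ through the order-dispatch black box and of estimating $\overline{V}_{l,c+1}$ from the historical order dataset is performed outside this routine and is not counted here. Making this explicit is essentially the only care the proof requires, so that the $O(1)$ claim is not misread as covering the cost of reward computation. There is no real obstacle in this proof; the main task is simply to present the line-by-line accounting cleanly and to flag the input/precomputation scoping.
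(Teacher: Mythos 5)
Your proof is correct and follows essentially the same route as the paper's: both arguments observe that the algorithm is loop-free, that each conditional and assignment costs constant time, and that the computation of $P_{l,l+1,c}$ and the estimation of $\overline{V}_{l,c+1}$ are treated as precomputed inputs outside the scope of the bound. Your version is merely a more explicit line-by-line accounting of the same reasoning, and the contrast you draw with the $\beta$-comparison scan in Algorithm \ref{algo:1/e} is a correct (if optional) observation.
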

\begin{proof}
    All the data preparation ought to be done in advance, and therefore contributes no time complexity. Looking up the stored value of $\overline{V}_{l, c+1}$ and making decisions whether to dispatch can be done in constant time. Hence, the total time complexity of the BI-pre-ADI algorithm is $O(1)$, which is constant.
\end{proof}

\subsection{Algorithm for the in-ADI Problem}\label{sec:in-1/e}
In this section, we consider the \textit{in-ADI problem}, which is even harder than the pre-ADI problem, because the orders which were dispatched in the previous dispatching intervals yet unfinished at the current time instance should still be considered in the current round of dispatch. 

As different divisions of past dispatching intervals could lead to different spatial distributions of unfinished dispatched orders, the profit and thus the optimal dispatching time of the current dispatching interval depend on past dispatching decisions. Similar to the pre-ADI problem, the in-ADI problem can also be categorized to a finite horizon optimal stopping problem. 

Next, we introduce the concept of \textit{in-ADI profit increment}, which will be utilized in our \textit{$1/e$-in-ADI algorithm} given in Algorithm \ref{algo:in-1/e} that solves the in-ADI problem. The reason for defining such concept is that the profit increment becomes dependent on past dispatching decisions when in-trip ridesharing is considered, which makes our original definition of the profit increment inaccurate. Note that we use $\mathbf{x}_{[1,j]}=(x_1,\cdots,x_j)$ to denote the vector containing the platform's dispatching decisions from $t_1$ to $t_j$, and $R_{l+1,j}(\mathbf{x}_{[1,j]})$ to denote the platform's profit in any dispatching interval from $t_l$ to $t_j$ given dispatching decisions $\mathbf{x}_{[1,j]}$.

\begin{Def} [in-ADI Profit Increment]
Given the last dispatching time instance $t_l$, and the current time instance $t_j$, the in-ADI profit increment in the current dispatching interval is defined as
\begin{equation}
    P_{l,l+1,j}(\mathbf{x}_{[1,j]}) = R_{l+1,j}(\mathbf{x}_{[1,j]})-R_{l+1,j}(\mathbf{x}^\prime_{[1,j]}),
\end{equation}
where $x_l=x_j=1$, $x_k = 0$ for $l<k<j$, $\mathbf{x}'_{[1,l-1]}=\mathbf{x}_{[1,l-1]}$, and $x^\prime_k=1$ for $l\leq k\leq j$. That is, $P_{l,l+1,j}(\mathbf{x}_{[1,j]})$ is the increment in the platform's profit when dispatching only at $t_j$ after $t_l$ compared with dispatching at the end of every unit time interval between $t_l$ and $t_j$.
\end{Def}
\vspace{-0.2cm}
\begin{algorithm}
\KwIn{$\beta$, $\Delta t$, $t_N$, $t_l$, $t_c$, $P_{l,l+1,l+1}(\mathbf{x}_{[1,l+1]})$,$\cdots$,$P_{l,l+1,c}(\mathbf{x}_{[1,c]})$\;}
\KwOut{$x_c$\;}
\tcp{\small Find the deadline for dispatching time.}
$t_{\max}\leftarrow \min\{t_N, t_l+\beta\Delta t\}$\;\label{algo:in-1/e1}
\tcp{\small Find the threshold time.}
$t_{\theta}\leftarrow t_l+\lceil\frac{t_{\max}-t_l}{e}\rceil$\;\label{algo:in-1/e2}
\uIf{$t_c < t_\theta$}{\label{algo:in-1/e3}
    $x_c\leftarrow 0$; \tcp{\small Threshold time not reached at $t_c$.} \label{algo:in-1/e4}
}
\uElseIf{$t_c = t_{\max}$}{\label{algo:in-1/e5}
    $x_c\leftarrow 1$; \tcp{\small Dispatching deadline reached at $t_c$.} \label{algo:in-1/e6}
}
\Else{\label{algo:in-1/e7}
    \uIf{$P_{l,l+1,c}(\mathbf{x}_{[1,c]})>\max_{j\in \{l+1,\cdots,c-1\}}P_{l,l+1,j}(\mathbf{x}_{[1,j]})$}{\label{algo:in-1/e8}
        $x_c\leftarrow 1$; \tcp{\small Profit increment maximized at $t_c$.} \label{algo:in-1/e9}
    }
    \Else{\label{algo:in-1/e10}
        $x_c\leftarrow 0$; \tcp{\small Profit increment not maximized at $t_c$.} \label{algo:in-1/e11}
    }
}
\KwRet $x_c$\;\label{algo:in-1/e12}
\caption{$1/e$-in-ADI algorithm}\label{algo:in-1/e}
\end{algorithm}

Algorithm \ref{algo:in-1/e} takes as its inputs the in-ADI profit increments $P_{l,l+1,l+1}(\mathbf{x}_{[1,l+1]})$,$\cdots$,$P_{l,l+1,c}(\mathbf{x}_{[1,c]})$, besides the same set of inputs taken by Algorithm \ref{algo:1/e}. The algorithm firstly finds the possible maximum ending time instance of the current dispatching interval, and the threshold time instance (line \ref{algo:in-1/e1}-\ref{algo:in-1/e2}). If the current time instance does not reach the threshold time instance, then it decides not to dispatch (line \ref{algo:in-1/e3}-\ref{algo:in-1/e4}). If the current time instance reaches the maximum ending time instance, then the algorithm decides to dispatch (line \ref{algo:in-1/e5}-\ref{algo:in-1/e6}). In other cases, Algorithm \ref{algo:in-1/e} compares the in-ADI profit increment of the current unit time interval with all in-ADI profit increments of previous unit time intervals in the current dispatching interval. Then, the algorithm decides to dispatch if the in-ADI profit increment of the current unit time interval is the maximum, and not to dispatch otherwise (line \ref{algo:in-1/e7}-\ref{algo:in-1/e11}).

Next, in the following Theorem \ref{complexityin-1/e}, we analyze the computational complexity of Algorithm \ref{algo:in-1/e}.

\begin{theorem}\label{complexityin-1/e}
    The computational complexity of Algorithm \ref{algo:in-1/e} is $O(1)$.
\end{theorem}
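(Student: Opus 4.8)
The plan is to mirror the argument used for Theorem~\ref{complexitye}, since Algorithm~\ref{algo:in-1/e} has exactly the same control structure as Algorithm~\ref{algo:1/e}, only with the in-ADI profit increments $P_{l,l+1,j}(\mathbf{x}_{[1,j]})$ playing the role of the plain profit increments $P_{l,l+1,j}$. First I would observe that, as in the statement of Theorem~\ref{complexitye}, the in-ADI profit increments $P_{l,l+1,l+1}(\mathbf{x}_{[1,l+1]}),\cdots,P_{l,l+1,c}(\mathbf{x}_{[1,c]})$ are supplied as inputs, so their evaluation contributes nothing to the running time of the algorithm itself.

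Next I would walk through the lines of Algorithm~\ref{algo:in-1/e} one block at a time. Computing $t_{\max}=\min\{t_N,t_l+\beta\Delta t\}$ (line~\ref{algo:in-1/e1}) and the threshold time $t_{\theta}=t_l+\lceil\frac{t_{\max}-t_l}{e}\rceil$ (line~\ref{algo:in-1/e2}) are each a constant number of arithmetic operations. The three conditional branches (lines~\ref{algo:in-1/e3}--\ref{algo:in-1/e4}, lines~\ref{algo:in-1/e5}--\ref{algo:in-1/e6}, and lines~\ref{algo:in-1/e7}--\ref{algo:in-1/e11}) each test a simple comparison and assign $x_c$, which is $O(1)$.

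The only non-trivial component is the comparison in line~\ref{algo:in-1/e8}, which requires evaluating $\max_{j\in\{l+1,\cdots,c-1\}}P_{l,l+1,j}(\mathbf{x}_{[1,j]})$. The key point is that, by Constraint~(\ref{beta}), every dispatching interval spans at most $\beta$ unit time intervals, so $c-l\le\beta$ and this maximum is taken over at most $\beta-1$ precomputed values. Since $\beta$ is a fixed system parameter independent of the problem instance, this takes $O(\beta)=O(1)$ time. Summing the costs of all blocks, the total running time of Algorithm~\ref{algo:in-1/e} is $O(1)$.

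\textbf{Main obstacle.} There is essentially no technical difficulty here; the one point that must be made explicit (and that is the whole content of the proof) is that $\beta$ is to be treated as a constant, so that the loop implicit in the $\max$ of line~\ref{algo:in-1/e8} does not scale with $N$ or with the size of the active order sets, and that the in-ADI profit increments are taken as given rather than computed within the algorithm.
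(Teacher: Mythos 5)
Your proposal is correct and matches the paper's intent exactly: the paper omits this proof, stating only that it is analogous to that of Theorem~\ref{complexitye}, and your argument is precisely that analogy spelled out (precomputed increments as inputs, a maximum over at most $\beta$ values with $\beta$ a fixed constant, and constant-time conditionals). Your explicit remark that $\beta$ must be treated as a system constant independent of $N$ is the same implicit assumption the paper makes in Theorem~\ref{complexitye}.
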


We omit the proof of this theorem, as it is similar to that of Theorem \ref{complexitye}.

\section{Experiments}\label{Experiment}
In this section, we describe the experiment setups, as well as the corresponding experimental results. 

\subsection{Experiment Setups}
\subsubsection{Datasets}\quad

\noindent
Our experiments are based on a real-world ridesharing order dataset, which contains all of the over 3.5 million ridesharing orders in the city of Beijing, China, received by Didi Chuxing from October 1st to October 31st, 2018. Each piece of data in the datasets has a number of features, such as \textit{order ID}, \textit{driver ID}, \textit{origin}, \textit{destination}, \textit{the time that the order was raised} and \textit{the cancellation time if the order was eventually canceled}. These features help us simulate and reproduce the behaviors of orders and drivers in the simulation.

\subsubsection{Simulator}\quad

\noindent
All orders in the simulation are reproduced according to our dataset. That is, orders' origins and destinations are set to be the nearest cells. The orders are raised in the simulation at the same timestamp as they were actually raised. 

Note that it is necessary to simulate order cancellations because one of the most important impacts of postponing dispatching is the massive increase in the quantity of canceled orders. 
We collect from the dataset all orders canceled by passengers. Each canceled order has a waiting time, which indicates how long the passenger waited before eventually canceling the order. We count the number of canceled orders in each unit time interval $[(i-1)\Delta t, i\Delta t)$, denoted by $\#\textnormal{canc}(i)$. Then, for each unit time interval $[(i-1)\Delta t, i\Delta t)$, we count the number of orders whose waiting time is longer than $(i-1)\Delta t$, denoted by $\#\textnormal{total}(i)$. Then, we estimate the cancellation probability as $\frac{\#\textnormal{canc}(i)}{\#\textnormal{total}(i)}$.
In our simulation, when deciding whether to dispatch at the end of each unit time interval, we calculate the waiting time of each order, find the corresponding cancellation probability, and cancel the order by the cancellation probability.

Drivers are retrieved by driver IDs and initialized at the same timestamps and locations when they first appear in our dataset. Timestamps and locations are mapped to the unit time intervals and vertices respectively. 
We update the positions of dispatched drivers at the end of each unit time interval by assuming that drivers deliver passengers along the planned routes\footnote{Note that we use the default route planning algorithms of Didi Chuxing in our experiment.} with an average speed obtained from the data. As for idle drivers, they walks randomly among the cells. 

\subsubsection{Baseline Algorithms}\quad

\noindent
In our experiments of the adaptive interval algorithms, we compare our algorithms with the \textit{uniform-base} algorithm that dispatches at a uniform time interval, which is widely adopted by existing ridesharing platforms. More specifically, we compare the $1/e$-pre-ADI and BI-pre-ADI algorithms with the uniform-base algorithm that dispatches orders at a uniform time interval $\Delta t$ considering only pre-trip ridesharing, whereas we compare the $1/e$-in-ADI algorithm with the same uniform-base algorithm that considers both pre-trip and in-trip ridesharing.


\subsection{Experimental Results}

\subsubsection{Experimental Results for the Spatial Clustering Problem}\quad

\noindent
We evaluate the spatial clustering algorithm on the entire urban area of Beijing, and construct the spatial shareability graphs in two typical periods of the day, $i.e.$, morning peak (7:00-10:00) and evening peak (17:00-20:00). 
In either period, the spatial shareability graph contains several connected components. Note that the connected components are connected in the spatial shareability graph, but are not necessarily neighbors geographically. Figures \ref{fig:morn_all}-\ref{fig:even_all} show the maximum spanning forests built from the spatial shareability graphs in the morning peak and the evening peak respectively. The generated maximum spanning forests are different in two periods due to the uneven temporal distribution of orders. The lines are thicker and darker for larger numbers of inter-cell shareable order pairs.

\begin{figure}[t]
  \centering
  \subfigure[Morning peak]{
    \begin{minipage}[t]{0.46\linewidth}
      \centering
      \includegraphics[width=\linewidth]{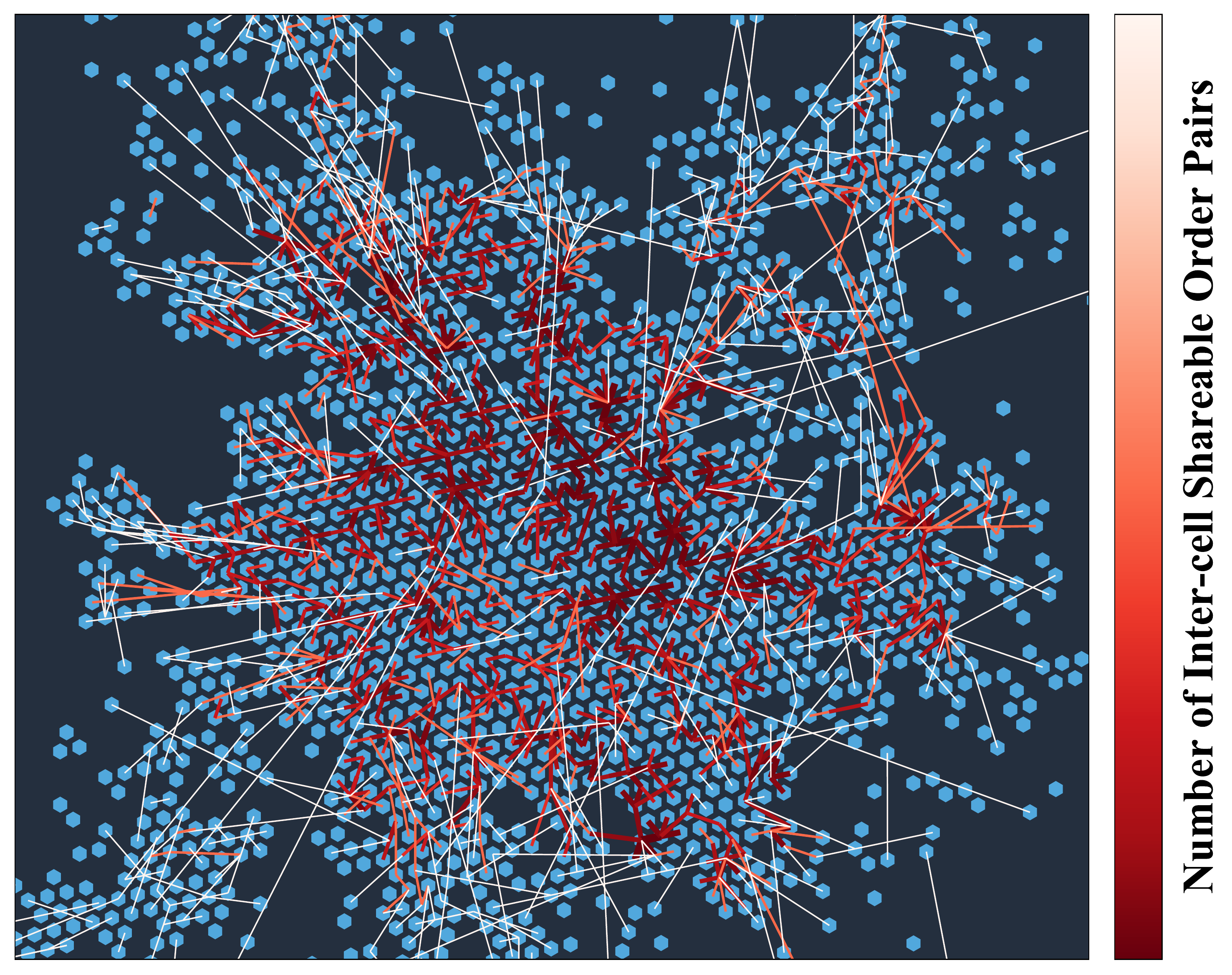}\label{fig:morn_all}
    \end{minipage}%
  }%
  \hspace{0.2cm}
  \subfigure[Evening peak]{
    \begin{minipage}[t]{0.46\linewidth}
      \centering
      \includegraphics[width=\linewidth]{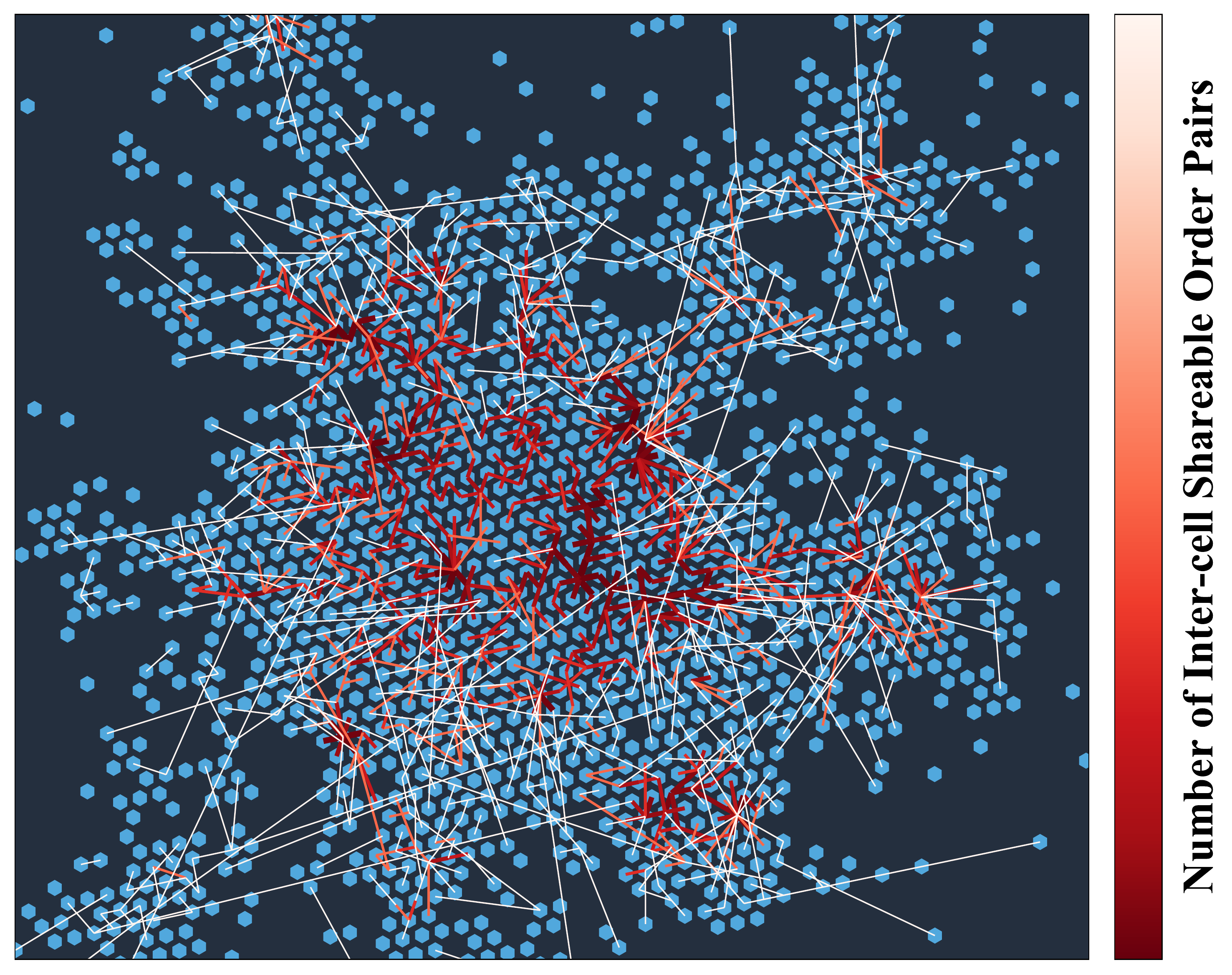}\label{fig:even_all}
    \end{minipage}
  }%
  \centering
  \vspace{-0.4cm}
  \caption{Maximum spanning forests of the morning peak and the evening peak.}
  \label{fig:three_periods}
  \vspace{-0.3cm}
\end{figure}

\begin{figure}[h]
  \centering
  \subfigure[Morning peak]{
    \begin{minipage}[t]{0.48\linewidth}
      \centering
      \includegraphics[width=\linewidth]{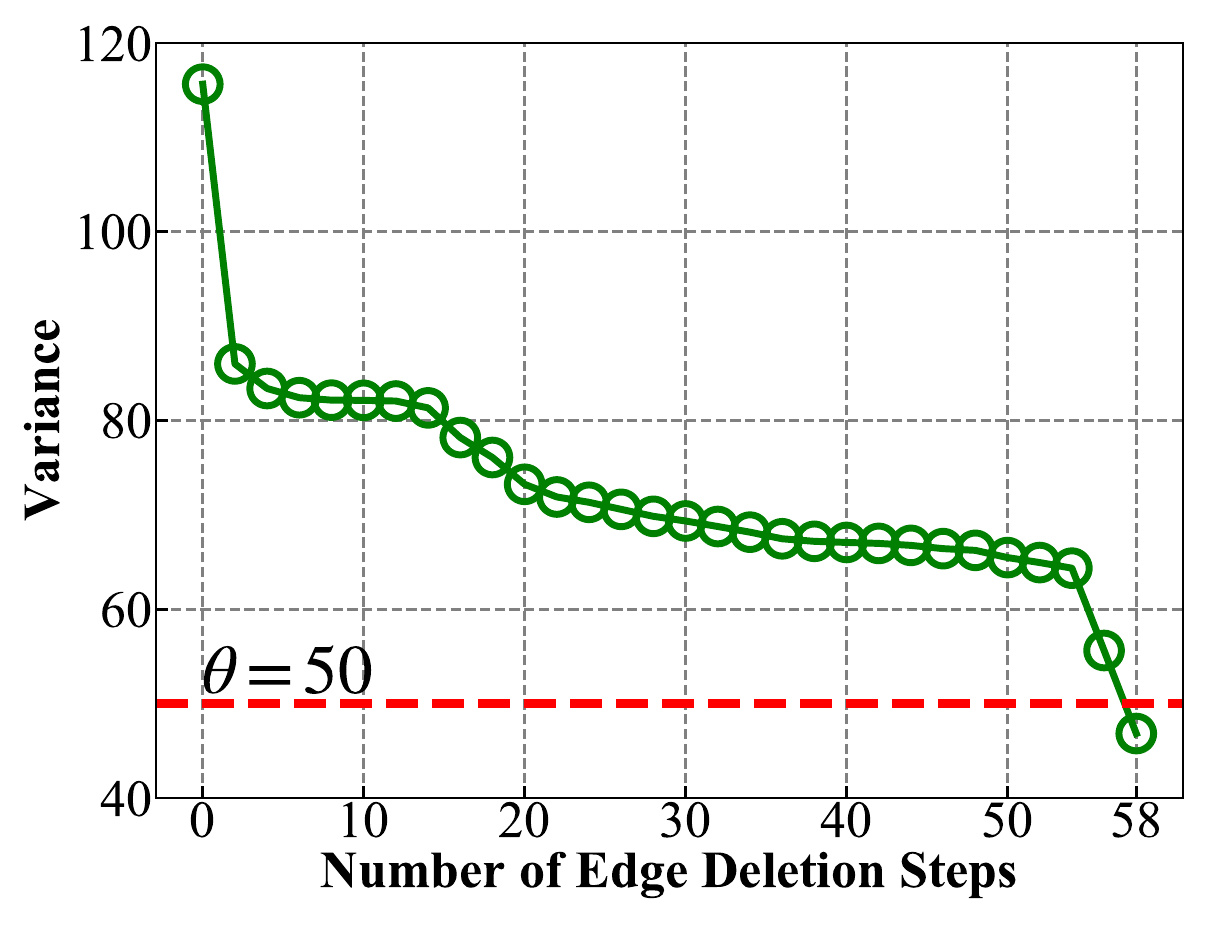}\label{fig:var_morn_all}
    \end{minipage}%
  }%
  \subfigure[Evening peak]{
    \begin{minipage}[t]{0.48\linewidth}
      \centering
      \includegraphics[width=\linewidth]{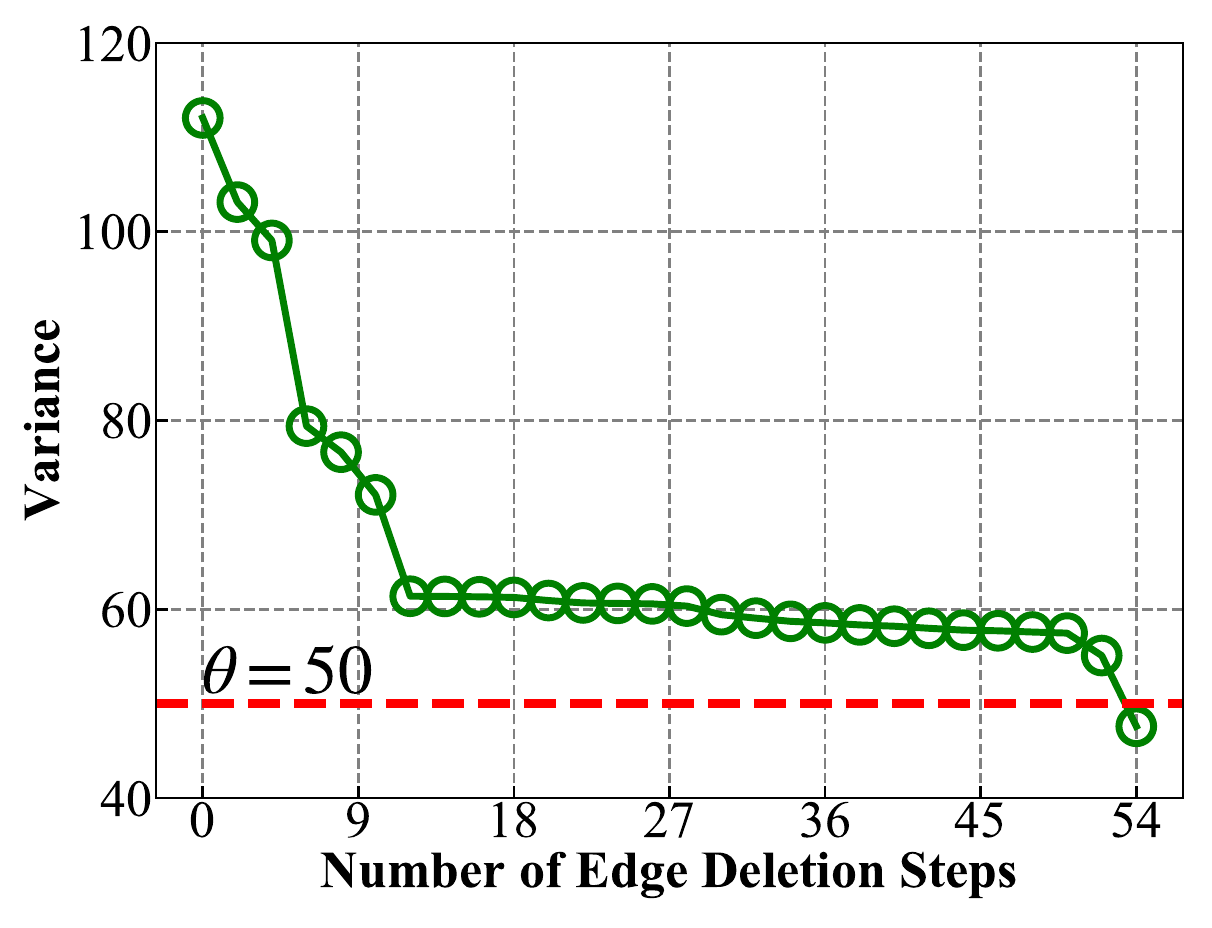}\label{fig:var_even_all}
    \end{minipage}%
  }%
  \vspace{-0.4cm}
  \caption{The maximum variance of the clusters in each edge deletion step.}
  \label{fig:var_morn_even_all}
\end{figure}

In either period, for each maximum spanning tree, the SC algorithm recursively deletes edges until the variance of the edge weights of each newly formed cluster becomes no greater than a manually selected threshold $\theta$. 
To ensure that each cluster consists of the cells with similar inter-cell shareability, $\theta$ is set as a small value 50 in our experiments. We record the maximum variance of the clusters in each edge deletion step, and show the decreasing trend of such maximum variance in Figures \ref{fig:var_morn_all} and \ref{fig:var_even_all} in the morning peak and the evening peak respectively. In Figures \ref{fig:var_morn_all} and \ref{fig:var_even_all}, it can be observed that the variance rapidly decreases down to the threshold, which shows the computational complexity of our spatial clustering algorithm is rather low.

Out of all clusters generated, we extract four representative clusters for further analysis and experiments without loss of generality. The clusters 1 and 2, which appear in the morning peak, are mainly composed of residential districts, where massive amounts of orders are submitted by passengers who are carpooling to work. Therefore, there are a number of inter-cell shareable order pairs within the clusters. Similarly, the clusters 3 and 4, which are formed in the commercial areas in the evening peak, have a large number of inter-cell shareable order pairs as well.

The above experimental results shown in this section demonstrate that our spatial clustering algorithm could successfully identify the spatial clusters as desired in a computationally efficient manner.

\subsubsection{Experimental Results for the Adaptive Interval Problem}\quad
\begin{figure}[t]
  \centering
  \subfigure[Cluster 1]{
    \begin{minipage}[t]{0.47\linewidth}
      \centering
      \includegraphics[width=\linewidth]{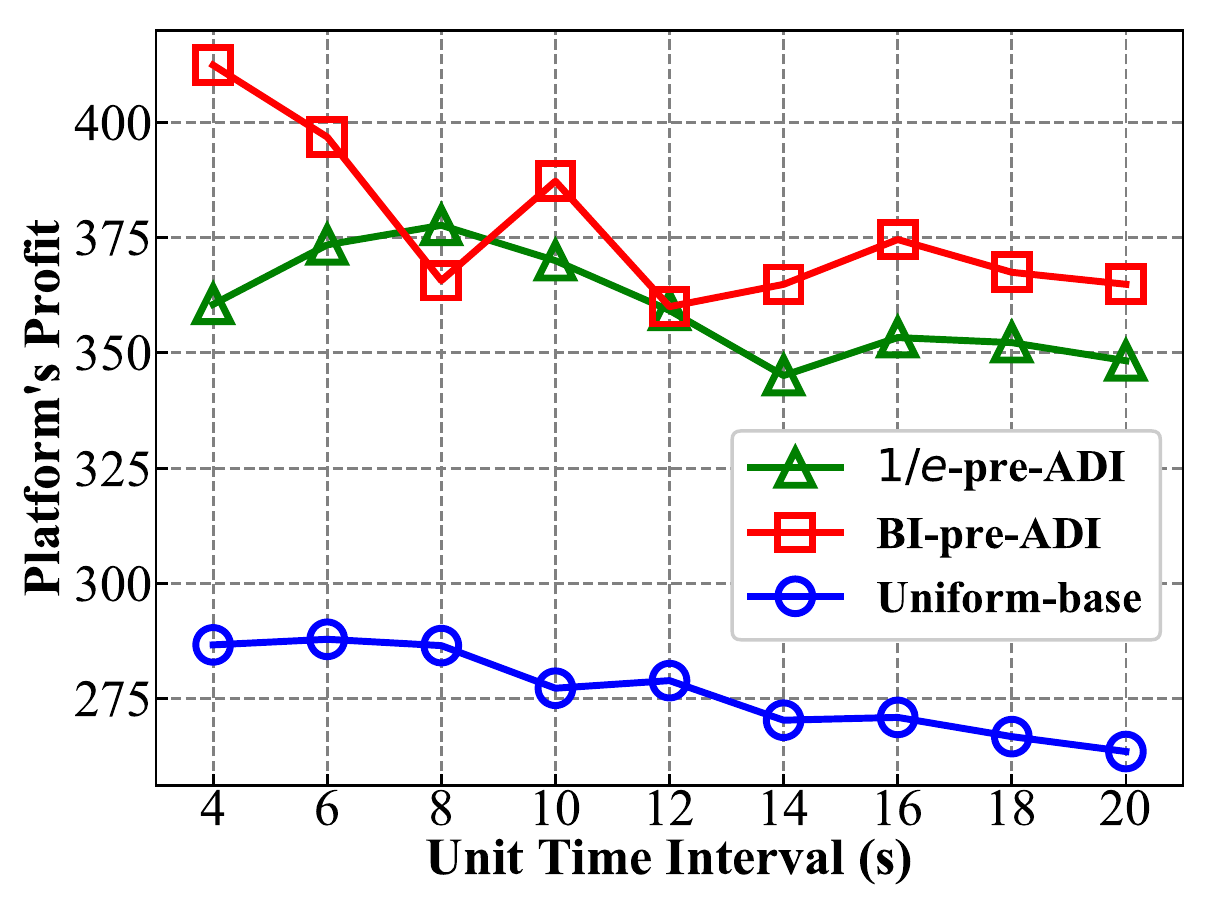}\label{int_revenue_hlg}
    \end{minipage}
  }
  \subfigure[Cluster 2]{
    \begin{minipage}[t]{0.47\linewidth}
      \centering
      \includegraphics[width=\linewidth]{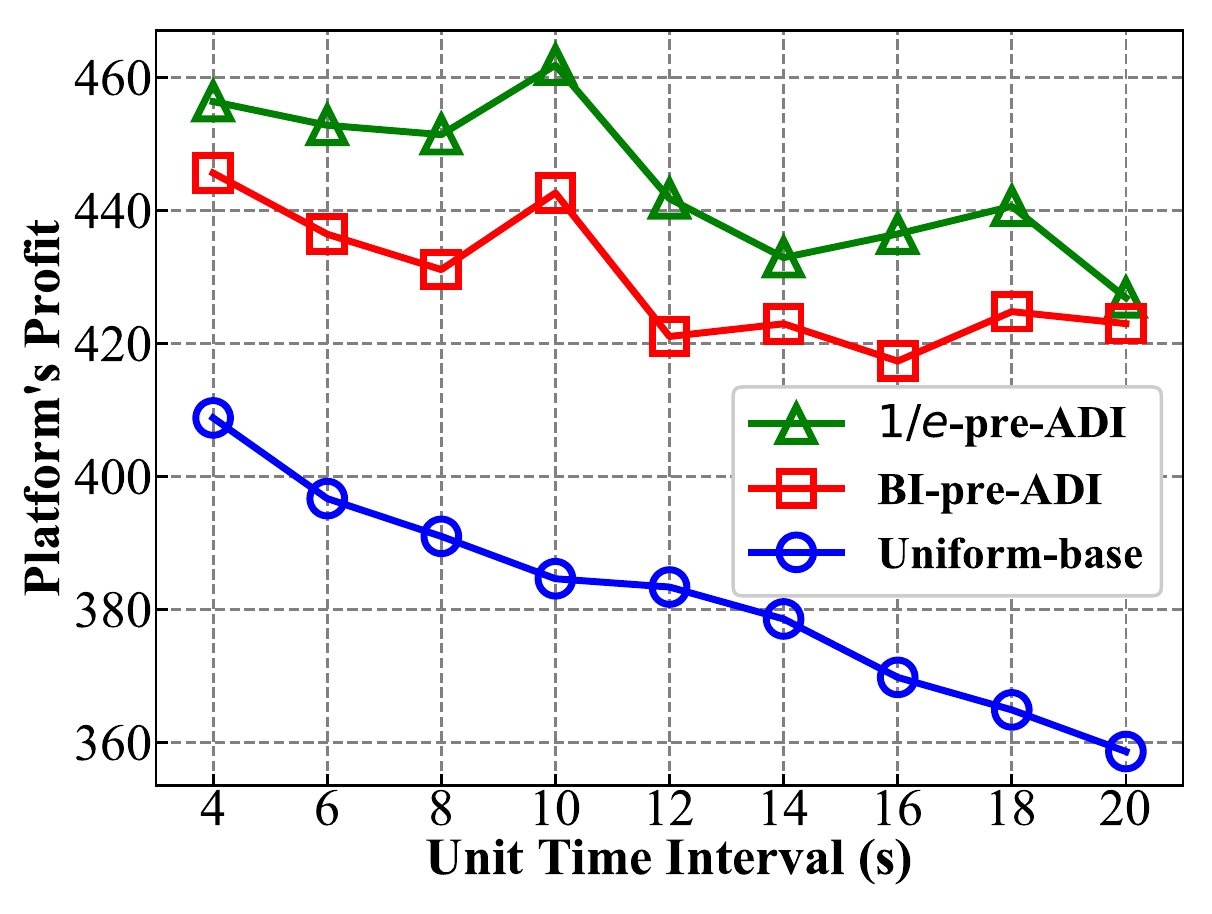}
    \end{minipage}
  }

  \subfigure[Cluster 3]{
    \begin{minipage}[t]{0.47\linewidth}
      \centering
      \includegraphics[width=\linewidth]{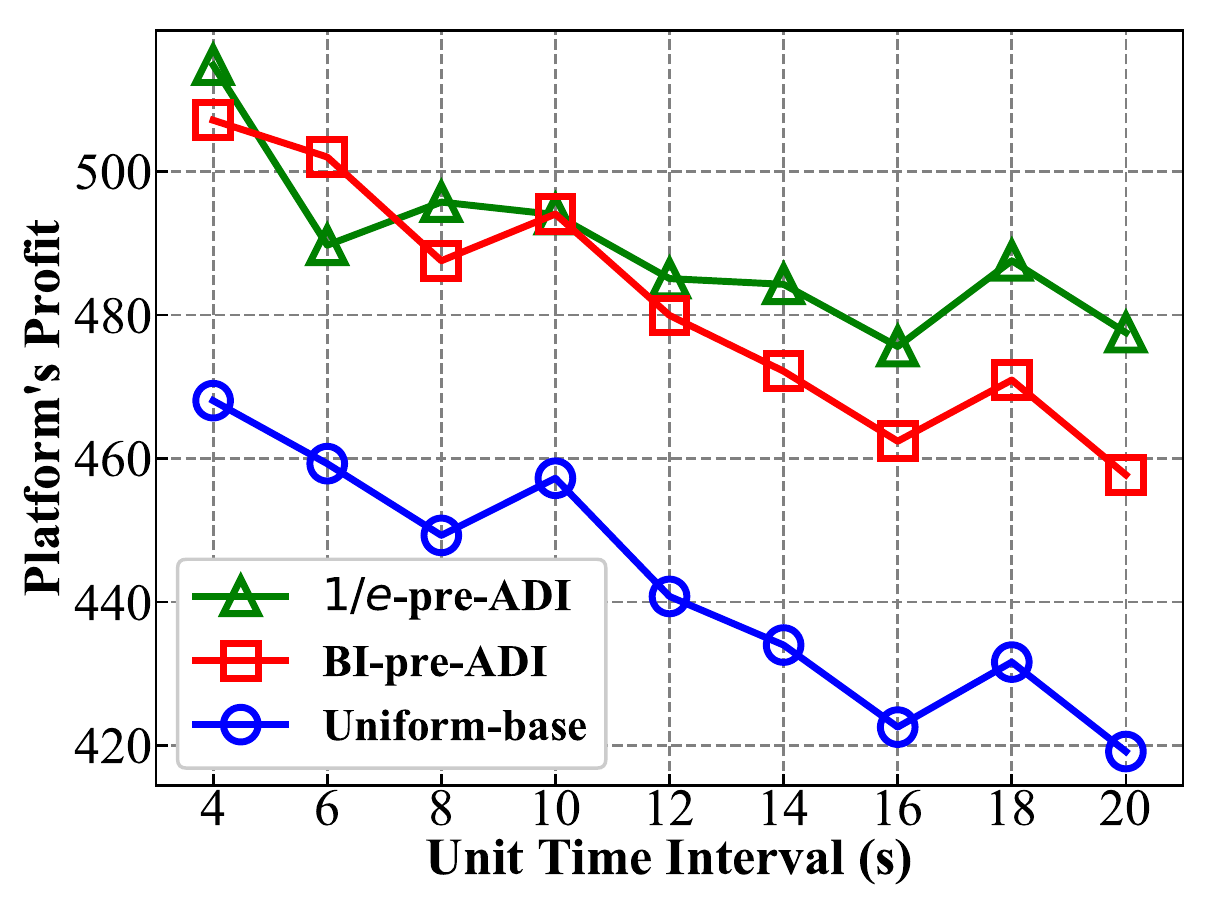}
    \end{minipage}
  }
  \subfigure[Cluster 4]{
    \begin{minipage}[t]{0.47\linewidth}
      \centering
      \includegraphics[width=\linewidth]{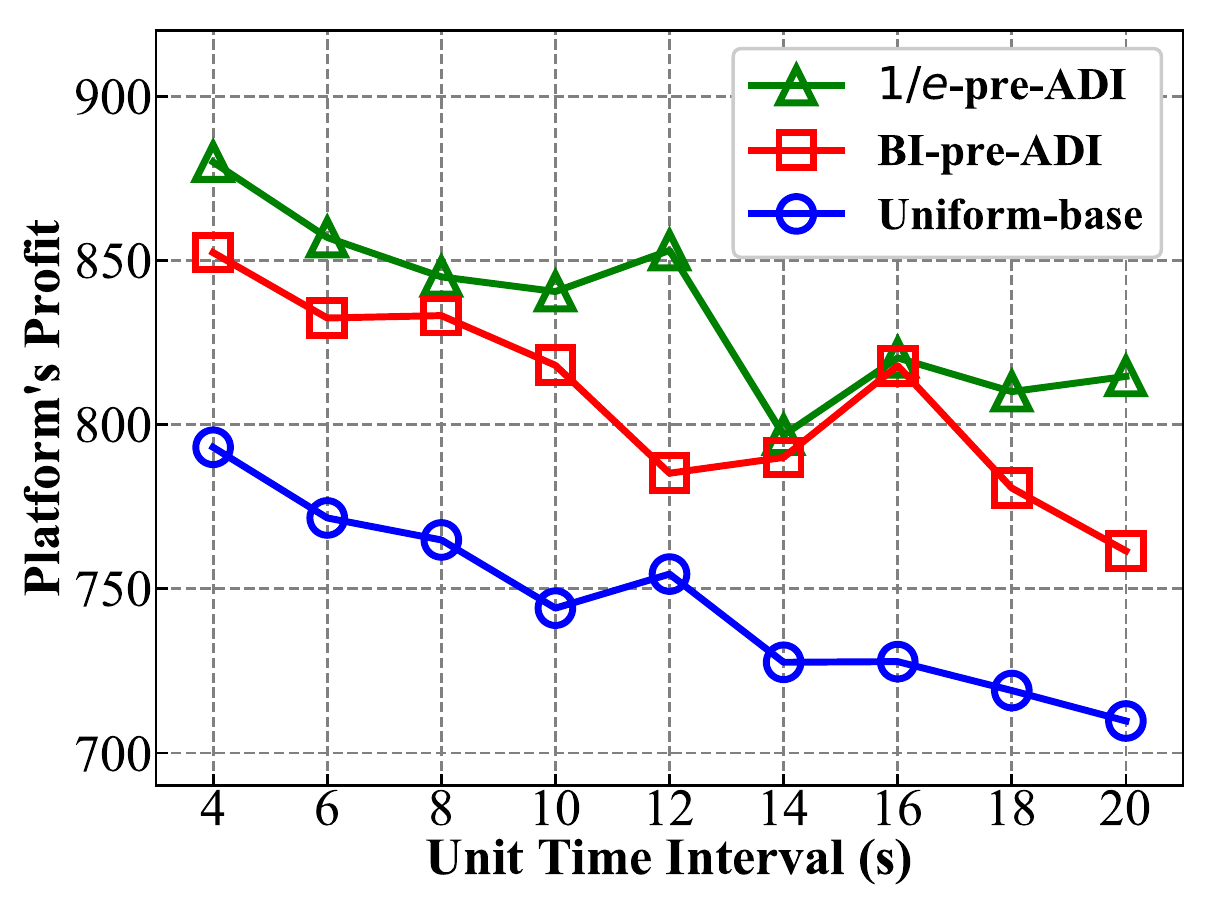}\label{int_revenue_wj}
    \end{minipage}
  }
  \centering
  \vspace{-0.4cm}
  \caption{Experiments of varying unit time interval $\Delta t$ for the pre-ADI problem.}
  \label{fig:int}
  \vspace{-0.6cm}
\end{figure}

\noindent
\textbf{Effect of varying unit time interval $\Delta t$.} Figure \ref{fig:int} shows the experiment results of varying the unit time interval $\Delta t$. The maximum dispatching interval $\beta\Delta t$ is set to be 90 seconds in the experiments. As the unit time interval $\Delta t$ increases, after deciding not to dispatch, the platform will have to wait longer to decide again, and the maximum number of time instances for the platform to decide within a dispatching interval will decrease. Figures \ref{int_revenue_hlg}-\ref{int_revenue_wj} show that the total profits obtained by $1/e$-pre-ADI and BI-pre-ADI are both higher than the profits obtained by uniform-base as the unit time interval $\Delta t$ varies.

\begin{figure}[t]
  \centering
  \subfigure[Cluster 1]{
    \begin{minipage}[t]{0.47\linewidth}
      \centering
      \includegraphics[width=\linewidth]{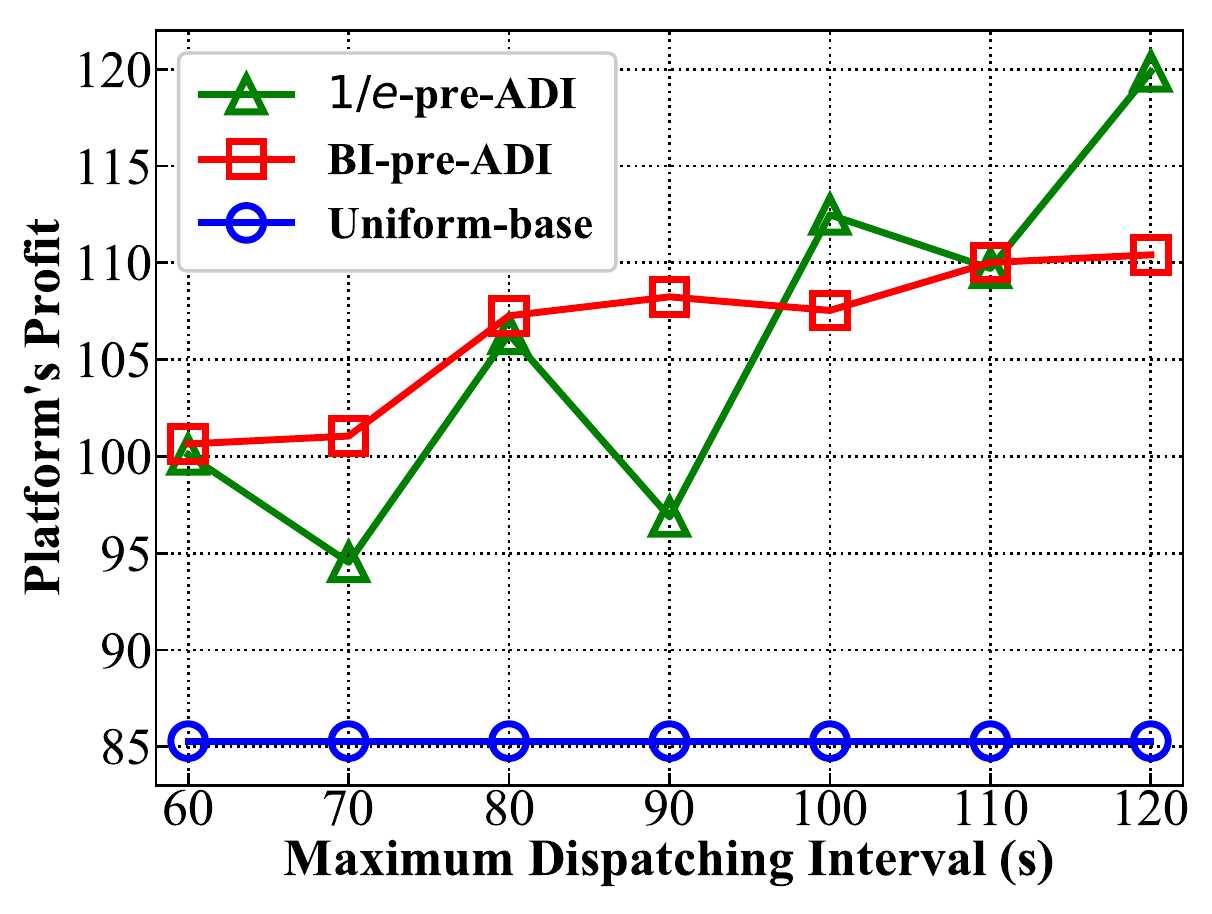}\label{max_revenue_hlg}
    \end{minipage}
  }
  \subfigure[Cluster 2]{
    \begin{minipage}[t]{0.47\linewidth}
      \centering
      \includegraphics[width=\linewidth]{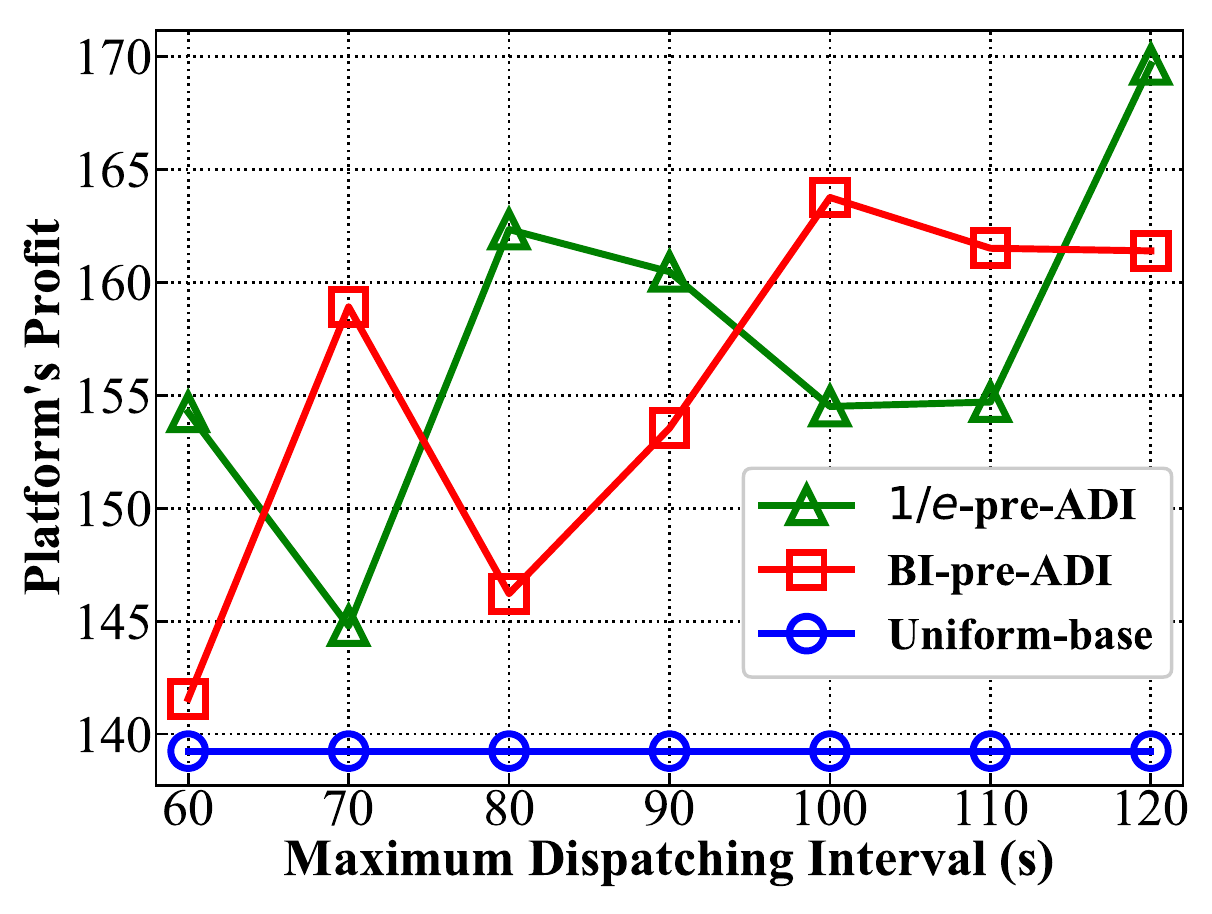}\label{max_revenue_wjxy}
    \end{minipage}
  }

  \subfigure[Cluster 3]{
    \begin{minipage}[t]{0.47\linewidth}
      \centering
      \includegraphics[width=\linewidth]{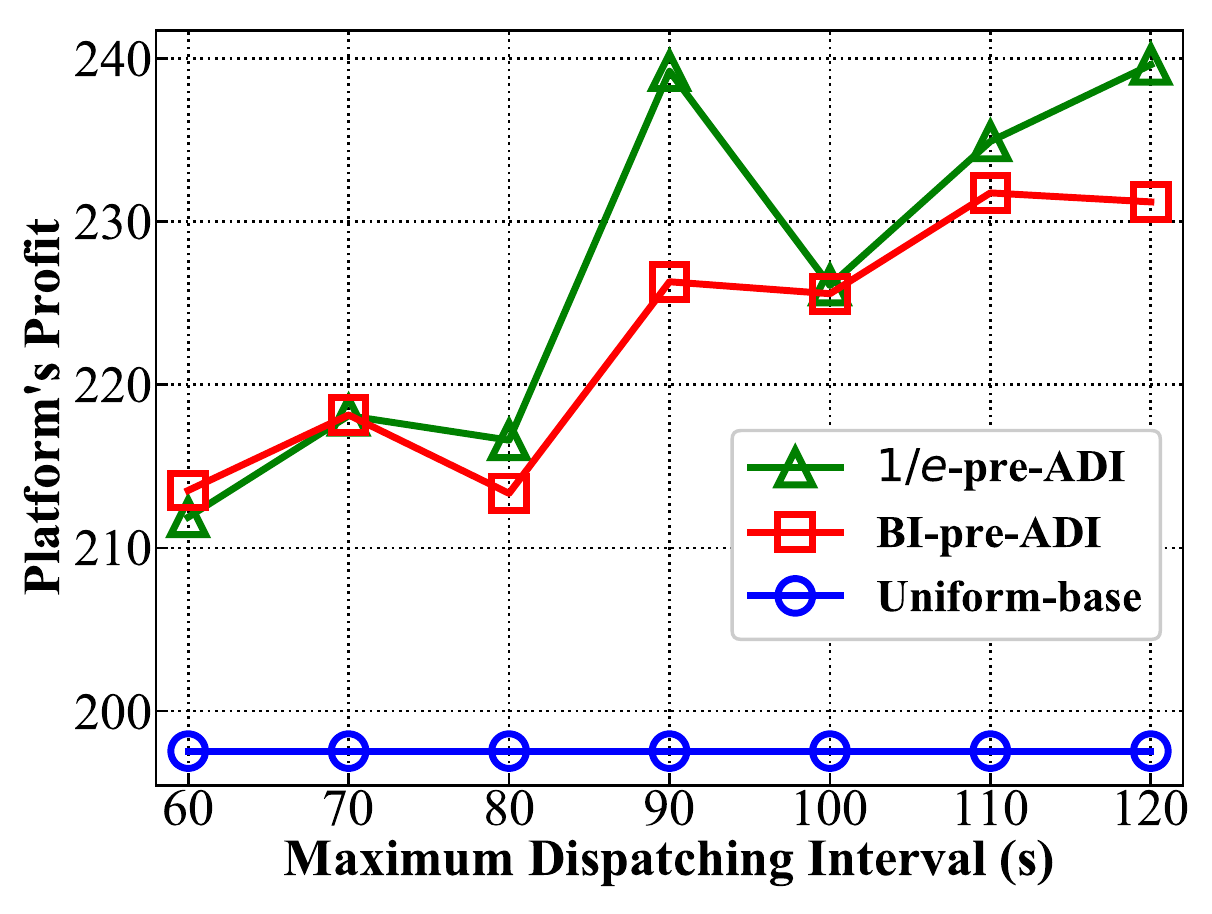}\label{max_revenue_rjy}
    \end{minipage}
  }
  \subfigure[Cluster 4]{
    \begin{minipage}[t]{0.47\linewidth}
      \centering
      \includegraphics[width=\linewidth]{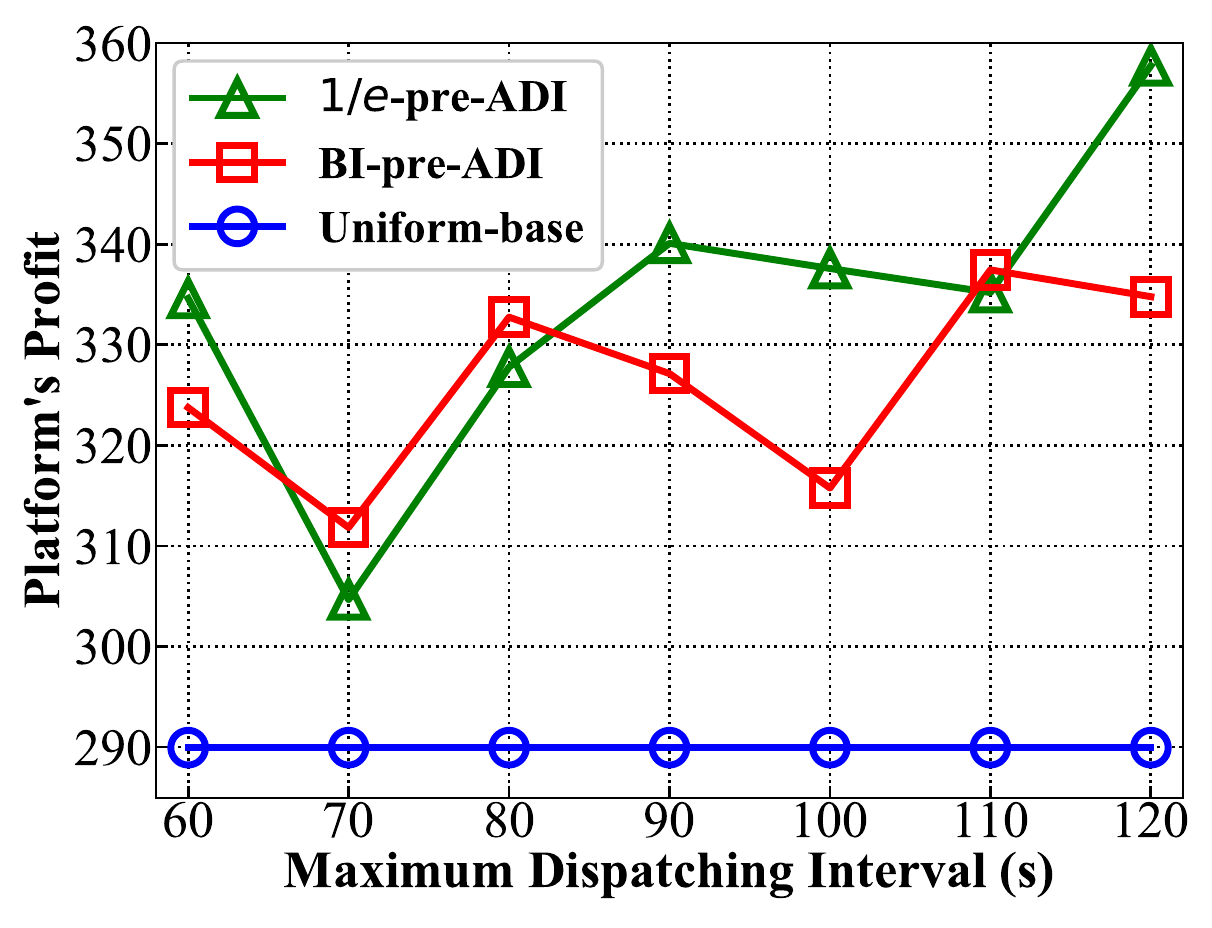}\label{max_revenue_wj}
    \end{minipage}
  }
  \centering
  \vspace{-0.5cm}
  \caption{Experiments of varying maximum dispatching interval $\beta \Delta t$ for the pre-ADI problem.}
  \label{fig:max}
  \vspace{-0.6cm}
\end{figure}

\begin{figure}[t]
 \centering
 \subfigure[Cluster 1]{
  \begin{minipage}[t]{0.47\linewidth}
   \centering
   \includegraphics[width=\linewidth]{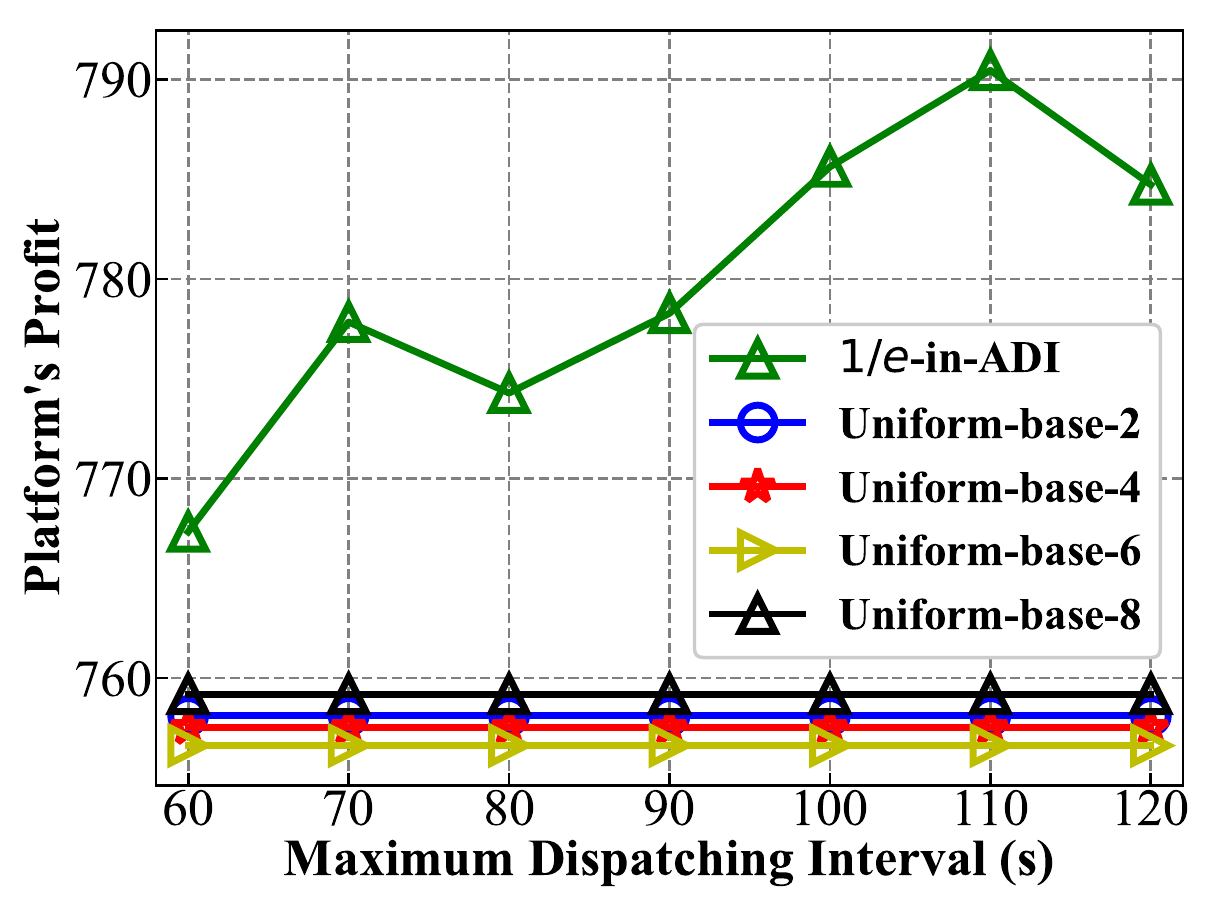}\label{in_max_revenue_hlg}
  \end{minipage}
 }
 \subfigure[Cluster 2]{
  \begin{minipage}[t]{0.47\linewidth}
   \centering
   \includegraphics[width=\linewidth]{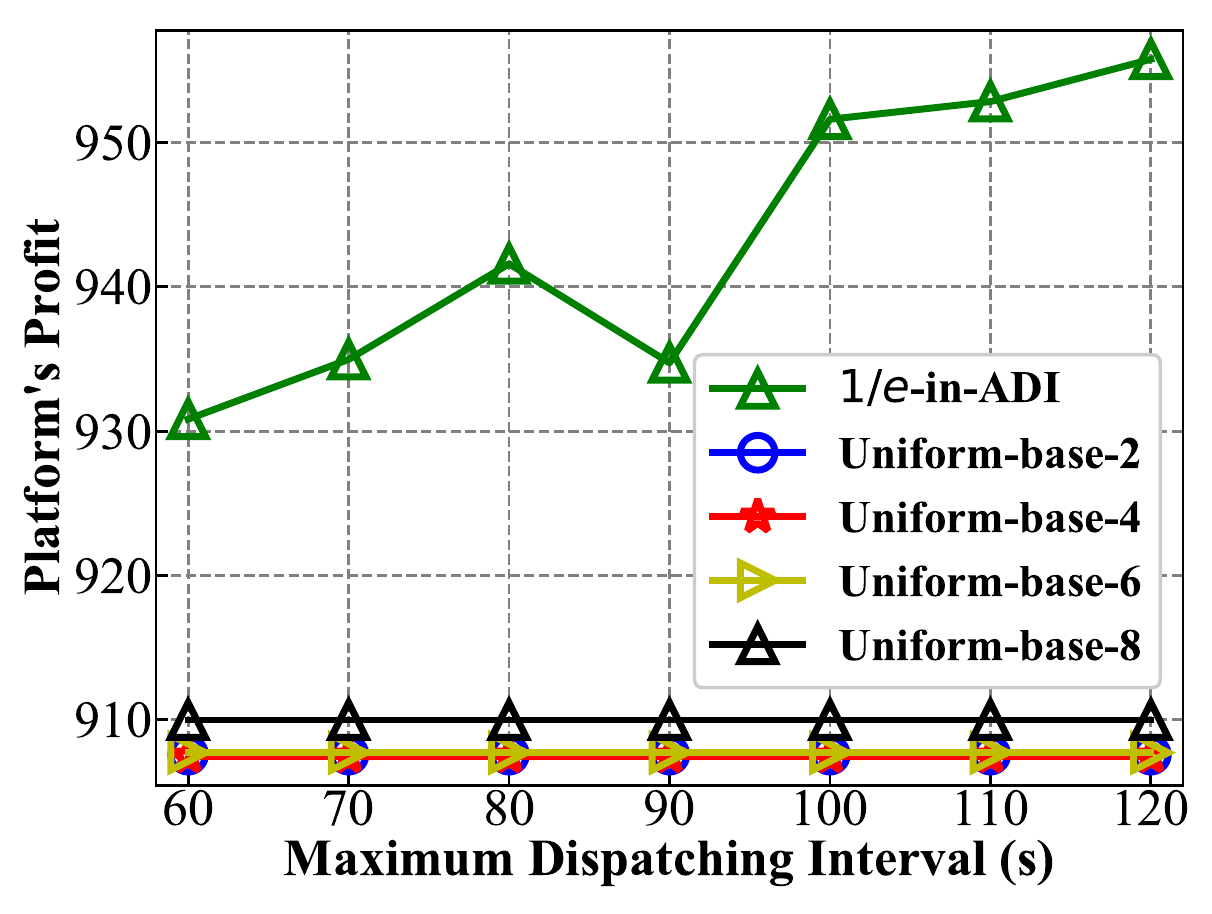}
  \end{minipage}
 }

 \subfigure[Cluster 3]{
  \begin{minipage}[t]{0.47\linewidth}
   \centering
   \includegraphics[width=\linewidth]{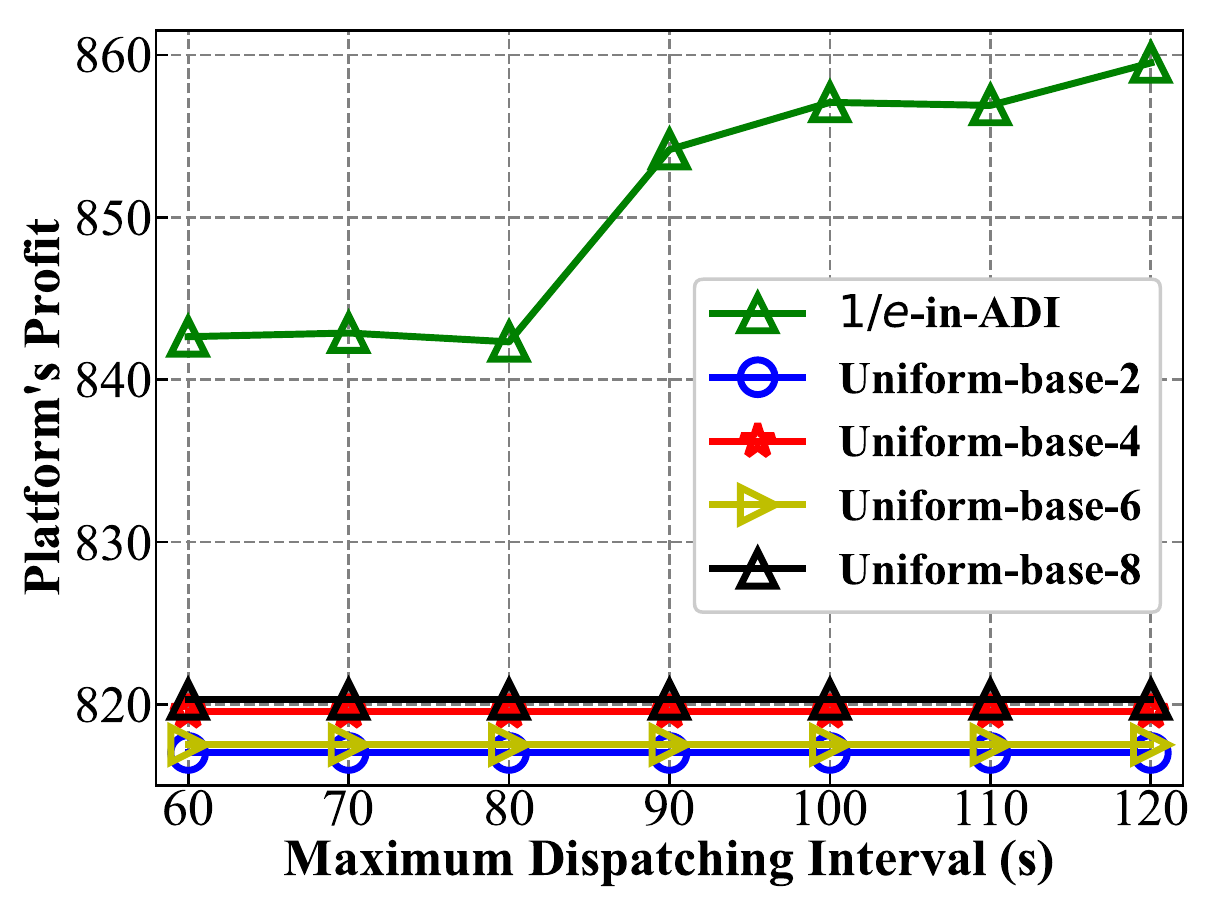}
  \end{minipage}
 }
 \subfigure[Cluster 4]{
  \begin{minipage}[t]{0.47\linewidth}
   \centering
   \includegraphics[width=\linewidth]{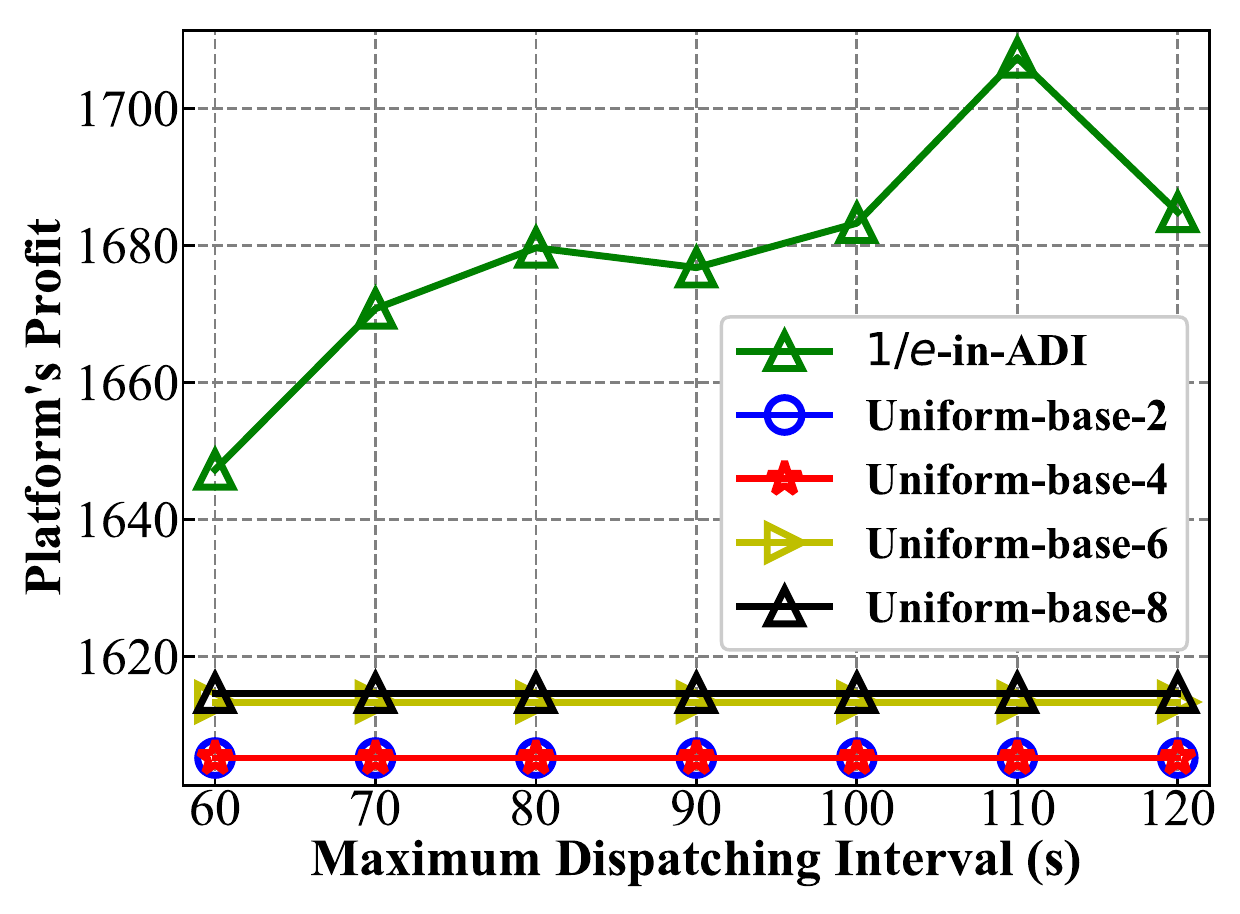}\label{in_max_revenue_wj}
  \end{minipage}
 }
 \centering
 \vspace{-0.4cm}
 \caption{Experiments of varying maximum dispatching interval $\beta \Delta t$ for the in-ADI problem.}
 \label{fig:inmax}
 \vspace{-0.6cm}
\end{figure}

\textbf{Effect of varying the maximum dispatching interval $\beta \Delta t$ for the pre-ADI problem.} The experiment results of varying the maximum dispatching interval $\beta \Delta t$ are shown in Figure \ref{fig:max}. In the experiments of varying the maximum dispatching interval, the unit time interval $\Delta t$ is set to be 20 seconds. As $\beta\Delta t$ increases, there are more unit time intervals in a dispatching interval, which allows the platform to postpone dispatching longer and have more chances to decide whether to dispatch. As shown in Figures \ref{max_revenue_hlg}-\ref{max_revenue_wj}, all platform's profits obtained by our proposed algorithms are better than profits obtained by the uniform-base method. 

Furthermore, it can be observed in Figures \ref{max_revenue_hlg}-\ref{max_revenue_wj}, the platform's profit has a trend to increase as the length of maximum dispatching interval increases. The reason is that the ridesharing system is allowed to observe more unit time intervals within a dispatching interval, and thus the probability of deciding to dispatch at the optimal time instance is greater.

\textbf{Effect of varying the maximum dispatching interval $\beta \Delta t$ for the in-ADI problem.}
Figure \ref{fig:inmax} shows the experiment results of varying the maximum dispatching interval $\beta\Delta t$ for the in-ADI problem. The unit time interval is set to be 2 seconds in the experiments. The uniform baseline methods with different unit time intervals, $i.e.$, 2, 4, 6, 8 seconds, are denoted by Uniform-base-2, 4, 6, 8, respectively. Similar to experiments for the pre-ADI problem, as the maximum dispatching interval $\beta\Delta t$ increases, there are more unit time intervals in a dispatching interval. Figures \ref{in_max_revenue_hlg}-\ref{in_max_revenue_wj} show that the profits obtained by $1/e$-in-ADI algorithm are better than profits obtained by all the uniform baselines.

Figures \ref{in_max_revenue_hlg}-\ref{in_max_revenue_wj} show that the platform's profit obtained by $1/e$-in-ADI algorithm have a trend to increase as the maximum dispatching interval $\beta\Delta t$ increases. Similar to the experiments for the pre-ADI problem, the reason is because when the ridesharing system is allowed to observe more unit time intervals within a dispatching interval, the ridesharing system is more likely to receive more shareable and profitable order pairs, and thus the ridesharing system has a greater probability to obtain more profit.

\section{Related Work}\label{Related}
With the availability of huge amount of data from transportation systems, data-driven approaches \cite{He2019BikeLanePlanning,Zhang2019BikeRecyclePlanning,He2020BikeFlow,Zhang2019UnveilingTD,Zhou2020OTDPE} have become more and more popular for urban computing tasks. Several works \cite{Xu2018KDD,Jin2019CoRide,Li2019MeanField,He2019CapsuleRL} study order dispatch and fleet management problems in ride-hailing platforms using datasets consisting of order information and vehicle trajectories. Different from previous works that dispatch one order to a driver, we focus on the ridesharing scenario where multiple orders share a ride in this paper, and investigate adapting dispatching intervals to the uneven spatio-temporal order distributions. 

Ridesharing has been a popular research topic in recent years. In this section, we organize the related works on this topic into three categories, \textit{i.e.}, dynamic pricing, route planning, and online matching.

\textbf{Dynamic Pricing.}
One line of works \cite{Asghari2018ADAPT,Fang2017Subsidies,Luca2017Nash,Tong2018DynamicPricing,Asghari2017OnlineTruthful,Asghari2016AuctionBased} investigate pricing for ridesharing systems. More specifically, Tong \textit{et al.} \cite{Tong2018DynamicPricing} design dynamic pricing strategies based on the spatio-temporal distributions of the demand and supply. Asghari \textit{et al.} \cite{Asghari2018ADAPT} consider the demand and supply at both orders’ origins and destinations for pricing. All these works aim to increase the platform’s profit, which is the same with this paper. However, instead of adapting prices, we adapt dispatching intervals to the uneven spatio-temporal order distributions, which is proven by this paper to be quite effective for boosting the platform's profit. Other works \cite{Fang2017Subsidies,Asghari2017OnlineTruthful,Luca2017Nash,Asghari2016AuctionBased} adopt an auction-based framework with different objectives, including maximizing social welfare and improving fairness compared with this paper.

\textbf{Route Planning.}
Recently, studies on route planning \cite{Ma2013Tshare,MaTshare2,Huang2014Guarantee,Khan2017Agree,Tong2018Unified,Bei2018AAAI,Ta2018SharedRoute,Jindal2018TaxiCarpool,Hargrave2017RoadCondition} for ridesharing have drawn significant attention. Among them, \cite{Ma2013Tshare,MaTshare2,Huang2014Guarantee,Khan2017Agree,Ta2018SharedRoute} focus on minimizing the travel distance. Furthermore, Tong \textit{et al.} \cite{Tong2018Unified} propose a unified formulation of route planning for shared mobility. Hargrave \textit{et al.} \cite{Hargrave2017RoadCondition} integrate dynamic road conditions into the route planning phase. Jindal \textit{et al.} \cite{Jindal2018TaxiCarpool} adopt reinforcement learning methods and choose the effective distance covered by the driver as the reward so as to minimize the travel cost. Different from these works, we focus on improving the platform's profit and exploit the advantages of having adaptive dispatching intervals, which is orthogonal to the route planning problem studied in the above works.

\textbf{Online Matching.}
Another set of related works \cite{Song2017Trichromatic,Cheng2017Utility,Zheng2018PriceAware,Cao2020Sharek,Chen2018PTrider,Chen2018PTAdynamic,Lin2019DemandAware,Miao2019RobustDispatch,Wang2019DynamicBipartite} study online matching between orders and vehicles in the ridesharing systems. Cheng \textit{et al.} \cite{Cheng2017Utility} and Miao \textit{et al.} \cite{Miao2019RobustDispatch} both take the future demand distribution into consideration in online matching without considering order cancellations. Song \textit{et al.} \cite{Song2017Trichromatic} and Wang \textit{et al.} \cite{Wang2019DynamicBipartite} treat the online matching as a graph matching problem, but ignore the uneven spatio-temporal order distributions. Zheng \textit{et al.} \cite{Zheng2018PriceAware} take order price into consideration in online matching to improve the platform's profit. A series of works \cite{Cheng2017Utility,Cao2020Sharek,Chen2018PTrider,Chen2018PTAdynamic} emphasize more on the user satisfaction and the travellers are able to choose the vehicle matching their preferences the best. Furthermore, some works \cite{Wang2016ActivityBased,Monteiro2018Alternative} propose activity-based ridesharing, which matches requests that have similar activities and accepts alternative destinations. All of these works dispatch orders at a uniform time interval, while we then exploit the advantages of the adaptive dispatching intervals.

To summarize, to the best of our knowledge, this paper is the first work that leverages the power of adapting dispatching intervals to the uneven spatio-temporal order distributions so as to boost the platform’s profit with an upper bound on the passenger waiting time.

\section{Conclusion}\label{Conclusion}
In this paper, we propose a hierarchical approach to boost the platform's profit and meanwhile guarantee the waiting time for passengers. Such hierarchical approach consists of the spatial clustering algorithm and adaptive interval algorithms. Our spatial clustering algorithm finds the spatial clusters such that the cells within each cluster are suitable to adopt the same adaptive dispatching interval. For different ridesharing models, we propose adaptive dispatching interval algorithms, i.e., the $1/e$-pre-ADI and BI-pre-ADI for the scenario where only pre-trip ridesharing is considered, and $1/e$-in-ADI algorithm in the case where in-trip ridesharing is also considered. Our adaptive interval algorithms determine dispatching time instances within given spatial clusters that improve platform's profit in an online manner. Furthermore, we prove it impossible to design constant-competitive-ratio algorithms for the online adaptive interval problem. We validate that our proposed algorithms significantly increase platform's profit compared to existing approaches with a large-scale ridesharing order dataset, which contains all of the over 3.5 million ridesharing orders in Beijing, China, received by Didi Chuxing from October 1st to October 31st, 2018. 

\bibliographystyle{IEEEtran}
\bibliography{reference}

\begin{appendices}
\section{Proof of Theorem 3.1}\label{appendix:complexityspatial}
\begin{proof}
  Suppose that $\mathcal{G}$ has $m$ connected components. For each connected component $\mathcal{S}_i=(\mathcal{V}_i, \mathcal{E}_i, w)$, it takes $O\big(|\mathcal{E}_i| \log |\mathcal{V}_i|\big)$ time to find its maximum spanning tree $T_i$ by the Prim's algorithm (line \ref{algo:sp3} in Algorithm \ref{algo:spatial}). Then, Algorithm \ref{algo:spatial} calls Algorithm \ref{algo:ED} to obtain the spatial cluster set induced from $T_i$ (line \ref{algo:sp4} in Algorithm \ref{algo:spatial}). The running time of Algorithm \ref{algo:ED} depends on both whether the variance of the edge weights of $T_i$ is below $\theta$, and whether the partitioning of $T_i$ is balanced. If the variance of the edge weights is below $\theta$, it costs $O(1)$ time. Otherwise, the worst case occurs when the deletion routine produces one tree with $|\mathcal{V}_i|-1$ vertices and the other with 1 vertex. Let us assume that this unbalanced partitioning arises in each recursive call, then the running time is $O\big(|\mathcal{V}_i|^2\big)$. Therefore, the total running time in the worst case is
  
  \begin{align*}
    &\sum_{i=1}^m O\big(|\mathcal{V}_i|^2\big) + O\big(|\mathcal{E}_i| \log |\mathcal{V}_i|\big)\\
    =&\sum_{i=1}^m O\big(|\mathcal{V}_i|^2 \log |\mathcal{V}_i|\big)
    = \sum_{i=1}^m O\big(|\mathcal{V}_i|^3\big) = O\Bigg(\bigg(\sum_{i=1}^m |\mathcal{V}_i|\bigg)^3\Bigg)\\
    =&O\big(|\mathcal{V}|^3\big),
  \end{align*}
  which proves the result stated in Theorem \ref{complexityspatial}.
\end{proof}

\section{Proof of Lemma 4.1}\label{appendix:lem1}
\begin{proof}
    Suppose there exists an algorithm $A$ that is $\alpha$-competitive ($\alpha\leq 1$). Denote the profit function of the pre-ADI problem by $p(a, x)$, where $a$ is a deterministic algorithm to solve this problem and $x$ is an appropriate input. For the pre-ADI problem, define the profit function to be platform's profit. Then $p(a,x)$ measures the performance of the algorithm $a$ on the pre-ADI problem. Since the pre-ADI problem is a maximization problem, it implies that for any input $x$, platform's profit obtained by the algorithm $A$, $p(A, x)$, is at least $\alpha$ times of that obtained by the optimal offline algorithm $OPT$ on the same input, $p(OPT, x)$. Therefore, it only takes one input that renders the algorithm $A$ not $\alpha$-competitive to counter the claim.
    
    Assume there exists an adversary model, which is aware of all the decisions that algorithm $A$ makes and generates input for the algorithm $A$. The adversary generates such an input $x$ that only one order $r$ is raised during the entire time period. The order $r$ is raised in one unit time interval and canceled in the next. The order $r$ has such origin, destination, and raised time, that algorithm $A$ would decide not to dispatch immediately at the end of the unit time interval in which the order $r$ is raised. It is obvious that the optimal offline algorithm would dispatch at the end of unit time interval where order $r$ is raised, and obtain the profit, $\textnormal{Profit}_r$, from $r$. The profit of the optimal offline algorithm is, $p(OPT, x) = \textnormal{Profit}_r$. However, since algorithm $A$ would decide not to dispatch, it would complete no order. The profit of algorithm $A$ is zero, $i.e.$, $p(A, x) = 0$. Hence, the competitive ratio of algorithm $A$ on this input is:

    \begin{equation*}
        \frac{p(A, x)}{p(OPT, x)} =  \frac{0}{\textnormal{Profit}_{r}} = 0.
    \end{equation*}
    The competitive ratio of algorithm $A$ is 0, which is contrary to the claim that it is $\alpha$-competitive ($\alpha\leq 1$).
\end{proof}

\section{Proof of Theorem 4.2}\label{appendix:thr1}


\begin{proof}
    Having proved lemma \ref{lem1}, now we only need to show that there exists no randomized algorithm for the pre-ADI problem that has a constant competitive ratio $\alpha$, $\alpha\leq 1$.
    
    Firstly, we need to introduce a distribution of input, and show that the expectation of competitive ratio of any deterministic algorithm on the input distribution is not constant. Then, by applying Yao's principle, we can conclude that no randomized algorithm on this input distribution has constant competitive ratio. 
    
    
    Define a probability distribution $\mathcal{X}$ of input as follows: \textit{(i)} Assume that an order $r_0$ is raised in the first unit time interval $[0,\Delta t)$. \textit{(ii)} Assume that there are $n$ orders $r_1,\dots, r_n$, and for each $i$,  $1\leq i\leq n$, the order $r_i$ is raised in the $i$th unit time interval, $i.e.$, $[(i-1)\Delta t, i\Delta t)$. All the $n$ orders, $r_1,\dots, r_n$, are not canceled before $n\Delta t$. \textit{(iii)} Assume that any two orders in $\{r_1,\dots, r_n\}$ cannot be dispatched to the same vehicle, and any order in $\{r_1,\dots, r_n\}$ can be dispatched to the same vehicle with $r_0$. \textit{(iv)} Assume that if $i>j$, $1\leq i,j \leq n$, then $\textnormal{Profit}_{r_0,r_i} > \textnormal{Profit}_{r_0,r_j}$. Neglect the profit $\textnormal{Profit}_{r_i}$, $1\leq i\leq n$. \textit{(v)} Assume that there are $n$ different inputs $X_1,\dots, X_{n}$ in $\mathcal{X}$ with identical probability $\frac{1}{n}$. For an input $X_i$, the order $r_0$ is canceled in the $(i+1)$th unit time interval.
    
    Let $\mathcal{A}$ be the set of all deterministic algorithms that can solve the pre-ADI problem, and $A$ be a random variable of $\mathcal{A}$. For maximization problems, Yao's principle \cite{DBLP:conf/focs/Yao77} states that: 
    \begin{equation*}
        \min_{x\in \mathcal{X}}\mathbb{E}[p(A, x)] \leq \max _{a\in \mathcal{A}}\mathbb{E}[p(a, X)],
    \end{equation*}
    where $X$ is a random input chosen from $\mathcal{X}$. Hence, to show the expected competitive ratio of the randomized algorithm for the worst case could be not constant, we only need to show that the competitive ratio of the best deterministic algorithm on input distribution $\mathcal{X}$ is not constant. Define the profit function to be platform's profit of an algorithm.
    
    The optimal offline algorithm $OPT$ will dispatch at $i\Delta t$ on the input $X_i$. Therefore, 
    \begin{equation*}
        \mathbb{E}_{\mathcal{X}}[p(OPT, X)] = \frac{1}{n}\times \sum_{i = 1}^n \textnormal{Profit}_{(r_0, r_i)}.
    \end{equation*}
    
    Note the fact that the deterministic algorithm will make decisions based only on the raised orders. If the order $r_0$ is not canceled at time $i\Delta t$, $1\leq i\leq n$, then no deterministic algorithm will make different decisions on whether to dispatch on different inputs. Therefore, no deterministic algorithm can obtain the maximal total net profit on all inputs as the optimal offline algorithm. The best deterministic algorithm $ALG$ on this input distribution would dispatch at time $n\Delta t$ on all inputs with the expected profit
    \begin{equation*}
        \mathbb{E}_{\mathcal{X}}[p(ALG, X)] = \frac{1}{n}\textnormal{Profit}_{(r_0,r_n)}.
    \end{equation*}
    Hence, 
    \begin{equation*}
        \max_{a\in \mathcal{A}}\mathbb{E}_{\mathcal{X}}[p(a, X)] = \mathbb{E}_{\mathcal{X}}[p(ALG, X)] = \frac{1}{n}\textnormal{Profit}_{(r_0,r_n)}.
    \end{equation*}
    The competitive ratio of the best deterministic algorithm on the input distribution is 
    \begin{equation*}
        \frac{\mathbb{E}_{\mathcal{X}}[p(ALG, X)]}{\mathbb{E}_{\mathcal{X}}[p(OPT, X)]} = \frac{\frac{1}{n}\textnormal{Profit}_{(r_0,r_n)}}{\frac{1}{n}\times \sum_{i = 1}^n \textnormal{Profit}_{(r_0, r_i)}} = \frac{\textnormal{Profit}_{(r_0,r_n)}}{\sum_{i = 1}^n \textnormal{Profit}_{(r_0, r_i)}}.
    \end{equation*}
    Let $n$ and $\{\textnormal{Profit}_{(r_0,r_1)},\dots,\textnormal{Profit}_{(r_0,r_n)}\}$ be such that $\textnormal{Profit}_{r_0,r_n} < \alpha\times \sum_{i}^n\textnormal{Profit}_{(r_0,r_i)}$. Then, 
    \begin{equation*}
        \frac{\mathbb{E}_{\mathcal{X}}[p(ALG, X)]}{\mathbb{E}_{\mathcal{X}}[p(OPT, X)]} < \alpha.
    \end{equation*}
    Hence, by Yao's principle, there is no randomized algorithm that has constant competitive ratio for the pre-ADI problem.
\end{proof}

\section{Proof of Theorem 4.3}\label{appendix:thr2}
\begin{proof}
    First of all, we can show that there exists no deterministic algorithm for the in-ADI problem that has a constant competitive ratio. Then, we give the complete proof of theorem \ref{thr2}. 
    
    Suppose there exists an algorithm $A$ that has constant competitive ratio $\alpha$, $0<\alpha\leq 1$, for the \textit{in-ADI problem}. Consider the adversary which generates the same input as in the proof of lemma \ref{lem1}, and it is obvious that the competitive ratio of the algorithm $A$ on this input is zero as well. Hence, there exists no deterministic algorithm that has a constant competitive ratio $\alpha$.
    
    To prove that there exists no randomized algorithm that is $\alpha$ competitive, $0<\alpha\leq 1$, we need to define an input distribution on which the best deterministic will fail to have the constant competitive ratio $\alpha$, and apply Yao's principle again. 

    Define the same probability distribution $\mathcal{X}$ of input as the probability distribution of input defined in the proof of theorem \ref{thr1}. If the platform dispatches at time $i\Delta t$, $1\leq i\leq n$, then the number of passengers reaches the capacity of vehicles. Therefore, there is no in-trip ridesharing. Thus, the profit of the best deterministic algorithm $ALG$ on this input distribution $\mathcal{X}$ is the profit obtained by dispatch at time $n\Delta t$, $i.e.,$
    \begin{equation*}
      \max_{a\in \mathcal{A}}\mathbb{E}_{\mathcal{X}}[p(a,X)] = \mathbb{E}_{\mathcal{X}}[p(ALG, X)] = \frac{1}{n}\textnormal{Profit}_{(r_0, r_n)}.
    \end{equation*}

    The profit of the optimal offline algorithm $OPT$ is given by:
    \begin{equation*}
      \mathbb{E}_{\mathcal{X}}[p(OPT, X)] = \frac{1}{n}\sum_{i=1}^n\textnormal{Profit}_{(r_0,r_i)}.
    \end{equation*}

    Hence, the competitive ratio of the best deterministic algorithm on this input distribution is:
    \begin{equation*}
      \frac{\mathbb{E}_{\mathcal{X}}[p(ALG, X)]}{\mathbb{E}_{\mathcal{X}}[p(OPT,X)]} = \frac{\textnormal{Profit}_{(r_0, r_n)}}{\sum_{i=1}^n\textnormal{Profit}_{(r_0,r_i)}}.
    \end{equation*}

    By selecting $n$ and $\{\textnormal{Profit}_{(r_0,r_1)},\dots, \textnormal{Profit}_{(r_0,r_n)}\}$ such that 
    \begin{equation*}
      \textnormal{Profit}_{(r_0,r_n)}<\alpha\times \sum_{i=1}^n{\textnormal{Profit}_{(r_0,r_i)}},
    \end{equation*}

    the best deterministic algorithm $ALG$ is not $\alpha$ competitive on this input distribution. By Yao's principle, there is no randomized algorithm that has constant competitive ratio $\alpha$. 
\end{proof}
\end{appendices}
\end{document}